\numberwithin{equation}{section} %pour numéroter les equations en incluant le numéro de section où on se trouve
\newtheorem{theorem}{Theorem}
\newtheorem{definition}[theorem]{Definition}
\newtheorem{proposition}[theorem]{Proposition}
\newenvironment{proof}[1][Proof]{\noindent\textbf{#1.} }{\ \rule{0.5em}{0.5em}}
\def\f{{\cal F}}
\newcommand{\field}[1]{\mathbb{#1}}
\newcommand{\bE}{\field{E}}
\newcommand{\bQ}{\field{Q}}
\title{The Multivariate Mixture Dynamics Model: \\Shifted dynamics and correlation skew}
\author{
Damiano Brigo\thanks{Dept. of Mathematics, Imperial College, London. damiano.brigo@imperial.ac.uk} \hspace{.5cm} Camilla Pisani\thanks{Dept. of Economics and Business Economics, Aarhus University, Denmark. The research leading to these results has received funding from the People Programme (Marie Curie Actions) of the European Union's Seventh Framework Programme FP7/2007-2013/ under REA grant agreement $n^{\circ}$ 289032. This paper  however reflects solely the Author's personal opinion and the Union is not liable for any use that may be made of the information contained therein. camilla.pisani@gmail.com.}  \hspace{.5cm} Francesco
Rapisarda\thanks{Bloomberg. This paper reflects solely the Author's personal opinion and does not represent the opinions of the author's  employers, present and past, in any way. frapisarda6@bloomberg.net} }
\date{\small{ First version:  December 15, 2015. This version: \today }}
\begin{document}
\maketitle

%\vspace{-1cm}

\begin{abstract}
The  Multi Variate Mixture Dynamics model is a tractable, dynamical, arbitrage-free multivariate model characterized by transparency on the dependence structure, since closed form formulae for terminal correlations, average correlations and copula function are available. It also allows for complete decorrelation between assets and instantaneous variances.  
Each single asset is modelled according to a lognormal mixture dynamics model, and this univariate version is widely used in the industry due to its flexibility and accuracy. The same property holds for the multivariate process of all assets, whose density is a mixture of multivariate basic densities.
This allows for consistency of single asset and index/portfolio smile.

In this paper, we generalize the MVMD model by introducing shifted dynamics and we propose a definition of implied correlation under this model. We investigate whether the model is able to consistently reproduce the implied volatility of FX cross rates once the single components are calibrated to univariate shifted lognormal mixture dynamics models. We consider in particular the case of the Chinese Renminbi FX rate, showing that the shifted MVMD model correctly recovers the CNY/EUR smile given the EUR/USD smile and the USD/CNY smile, thus highlighting that the model can also work as an arbitrage free volatility smile extrapolation tool for cross currencies that may not be liquid or fully observable. 

We compare the performance of the shifted MVMD model in terms of implied correlation with those of the shifted Simply Correlated Mixture Dynamics model where the dynamics of the single assets are connected naively by introducing correlation among their Brownian motions.   
Finally, we introduce a model with uncertain volatilities and correlation. The Markovian projection of this model is a generalization of the shifted MVMD model.

\bigskip

{\bf Key words:} MVMD model, Mixture of densities, Multivariate local volatility, Correlation Skew, Random Correlation, Calibration, Cross exchange rates, FX smile, Index volatility smile, renminbi-USD smile, renminbi-EUR smile, CNY-USD smile, CNY-EUR smile, SCMD model

\bigskip

{\bf AMS classification codes}: 60H10, 60J60, 62H20, 91B28, 91B70. 
\bigskip

{\bf JEL}: G13.
\end{abstract}

\section{Introduction to the Multivariate Mixture Dynamics}
The Multi Variate Mixture Dynamics model (MVMD) introduced by Brigo, Mercurio and Rapisarda \cite{MVMDWorkingPaper} and recently described in a deeper way in Brigo, Rapisarda and Sridi \cite{MVMD} is a tractable dynamical arbitrage-free model defined as the multidimensional version of the lognormal mixture dynamics model (LMD) in \cite{mixture1} and \cite{mixture2} (see also \cite{mixtureSartorelli}). The single-asset LMD model is a no-arbitrage model widely used among practitioners because of its practical advantages in calibration and pricing (analytical formulae for European options, explicit expression for the local volatility) and of its flexibility and accuracy. In fact, a variant of this model is presently used in the calibration of implied volatility surfaces for single stocks and equity indices in the Bloomberg terminal \cite{RefBloomberg}, and in the subsequent pricing of European, American and path-dependent options on single assets and baskets of assets. The main advantage of the MVMD over other multidimensional models, such as e.g., the Wishart model (\cite{fonseca07} and \cite{gurieroux07}) is in its tractability and flexibility which allows the MVMD to calibrate index volatility smiles consistently with the univariate assets smiles. In addition, a full description of its dependence structure (terminal correlations, average correlations, copula functions) is available. 

The MVMD model also enjoys some interesting properties of Markovian projection. First of all, the model can be seen as a Markovian projection of a model with uncertain volatilities denominated MUVM model. As a consequence, European option prices under the MVMD model can more easily be computed under the MUVM model instead. However, the MVMD model remains superior in terms of smoothness and dynamics. 
Secondly, the Geometric average basket under the MVMD model can be projected into a univariate lognormal mixture dynamics model. Consequently, European option prices on the basket can be easily computed through the Black and Scholes formula.  

Finally, under the MVMD model, the terminal correlation between assets and squared volatilities is zero. This mitigates the common drawback of local volatility models of having perfect instantaneous correlation between assets and squared volatilities.

In this paper we generalize the MVMD model, including shifts to the dynamics of the single assets, and we study the correlation skew under this framework.

Before going into details, we recapitulate the definition of the MVMD model (in the non-shifted case), starting with the univariate LMD model and then generalizing to the multidimensional case. 

\subsection{The volatility smile mixture dynamics model for single assets}

Given a maturity $T>0$, we denote by $P(0,T)$ the price at time $0$ of
the zero-coupon bond maturing at $T$, and by $(\Omega,\f,\mathbb{P})$ a probability space with a filtration $({\f}_t)_{t\in[0,T]}$ which is $\mathbb{P}$-complete and satisfying to the usual conditions. We assume the existence of a measure $\mathbb{Q}$ equivalent to $\mathbb{P}$, called the risk--neutral or pricing measure, ensuring arbitrage freedom in the classical setup, for example, of Harrison,  Kreps and Pliska \cite{harrison&kreps,harrison&pliska}.
In this framework we consider $N$ purely \textit{instrumental} diffusion processes $Y^i(t)$ with dynamics
\begin{equation}\label{lognormal_EDS}
d Y^i(t) = \mu  Y^i(t) dt + v^i(t, Y^i(t))Y^i(t) dW(t)
\end{equation}
and a deterministic initial value $Y^i(0),$ marginal densities $p_t^{i}$ and diffusion coefficient $v_i$. We define $S_t$ as the solution of 
\begin{equation}
dS(t) = \mu S(t) dt + s(t,S(t)) S(t) dW(t) \label{localVol}
\end{equation}
where $s$ is a local volatility function, namely a deterministic function of $t$ and $S$ only, and it is computed so that the marginal density $p_t$ of $S(t)$ is 
a linear convex combination of the densities $p_t^{i}$ 
\cite{mixture1,mixture2,mixtureFX}:
\begin{equation}
p_t = \sum_i \lambda^i p_t^{i} \hspace{.2cm} \mbox{with}
\hspace{.2cm} \lambda^i \ge 0, \forall i \hspace{.2cm} \mbox{and}
\hspace{.2cm} \sum_i \lambda^i = 1. \label{mixtureOne}
\end{equation}

In what follows we restrict ourselves to the case
\begin{equation}\label{sgherlo}
\left\{
  \begin{array}{l l}
Y^i(0)&=S(0),\\
v_i(t,x) &= \sigma^i(t), \\ 
V^i(t) &= \sqrt{\int_0^t \sigma^i(s)^2 ds} \\
p_t^{i}(x) &= \frac{1}{\sqrt{2 \pi} x V^i(t)} \exp\left[ -\frac{1}{2
V_i^2(t)} \left( \ln\left(\frac{x}{S(0)}\right) - \mu t +
\frac{1}{2} V^i(t)^2 \right)^2 \right] = \ell^i_t(x)
 \end{array} \right.
\end{equation}
with $\sigma^i$ deterministic. 
The parameter $\mu$ is completely specified by $\mathbb{Q}$. If the asset is a stock paying a continuous dividend yield $q$ and $r$ is the time $T$ constant risk-free rate, then $\mu=r-q$. If the asset is an exchange rate and $r_d$ and $r_f$ are the (deterministic) domestic and foreign rates at time $T$, respectively, then $\mu=r_d-r_f$. If the asset is a forward price, then $\mu=0$.

Brigo and Mercurio \cite{mixture2} proved that defining 
\begin{equation}\label{nu}
s(t,x)  =  \left(\frac{\sum_{k=1}^N
\lambda^{k} \sigma^k(t)^2
\ell^k_t(x)}{\sum_{k=1}^N \lambda^{k}
\ell^k_t(x)}\right)^{1/2}
\end{equation}
and assuming a few additional nonstringent assumptions on the $\sigma^i$, the corresponding dynamics for $S_t$ admits a unique strong solution. 

\begin{theorem}\label{th:LMDE} {\textbf{Existence and uniqueness of solutions for the LMD model}}. Assume that all the real functions $\sigma^i(t)$, defined on the real numbers $t \ge 0$, are once continuously differentiable and bounded from above and below by two positive real constants. Assume also that in a small initial time interval $t \in [0, \epsilon]$, $\epsilon >0$, the functions $\sigma^i(t)$ have an identical constant value $\sigma_0$. Then the Lognormal Mixture Dynamics model (LMD) defined by
\begin{equation}\label{eq:dcmix}  
d S_t = \mu S_t dt + s(t,S_t) S_t dW_t, \ \ S_0, \ \  s(t,x)  =  \left(\frac{\sum_{k=1}^N
\lambda^{k} \sigma^k(t)^2
\ell^k_t(x)}{\sum_{k=1}^N \lambda^{k}
\ell^k_t(x)}\right)^{1/2},
\end{equation}
admits a unique strong solution and the forward Kolmogorov equation (Fokker Planck equation) for its density admits a unique solution satisfying \eqref{mixtureOne}, which is a mixture of lognormal densities.
\end{theorem}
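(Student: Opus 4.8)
The plan is to split the statement into two parts---strong well-posedness of the \sde \eqref{eq:dcmix} and the identification of its Fokker--Planck solution with the mixture \eqref{mixtureOne}---and to reduce the first part to the classical It\^o existence and uniqueness theorem (local Lipschitz plus linear growth). The only genuine difficulties are the degeneracy of the local volatility \eqref{nu} as $t\to 0^+$ and the verification that the weighting of the $\ell^k_t$ produces coefficients regular enough for that theorem; the bulk of the work therefore goes into controlling $s(t,x)$ rather than into the SDE theory itself. I would first record the elementary but crucial bound that, since the weights $w_k(t,x)=\lambda^k\ell^k_t(x)/\sum_j\lambda^j\ell^j_t(x)$ are nonnegative and sum to one, $s(t,x)^2=\sum_k w_k(t,x)\,\sigma^k(t)^2$ lies between $\min_k\sigma^k(t)^2$ and $\max_k\sigma^k(t)^2$, hence is bounded above and below by the two positive constants of the hypothesis. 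This uniform ellipticity is what ultimately guarantees positivity of $S$ and uniqueness of the forward equation.

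The main obstacle is the behaviour near $t=0$, where every $V^i(t)\to 0$ and the densities $\ell^i_t$ collapse to a point mass, so that the weights $w_k$ become singular and no Lipschitz estimate can hold uniformly down to $t=0$. This is exactly what the assumption $\sigma^i(t)\equiv\sigma_0$ on $[0,\epsilon]$ removes: on that interval all $V^i(t)=\sigma_0\sqrt{t}$ coincide, so all the $\ell^i_t$ equal a single lognormal density $\ell_t$, the ratio in \eqref{nu} simplifies to $s(t,x)\equiv\sigma_0$, and \eqref{eq:dcmix} is nothing but geometric Brownian motion $dS_t=\mu S_t\,dt+\sigma_0 S_t\,dW_t$. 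This GBM has a unique strong solution which stays strictly positive and whose marginal is precisely $\ell_t=\sum_i\lambda^i\ell^i_t$, so the theorem holds on $[0,\epsilon]$ and supplies a strictly positive, $\f_\epsilon$-measurable starting point $S_\epsilon$ for the next step.

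For $t\in[\epsilon,T]$ I would pass to the log variable $X_t=\ln S_t$, which by It\^o satisfies $dX_t=(\mu-\tfrac12\tilde s(t,X_t)^2)\,dt+\tilde s(t,X_t)\,dW_t$ with $\tilde s(t,y)=s(t,e^y)$. After cancelling the common factor $e^{-y}/\sqrt{2\pi}$ from numerator and denominator, $\tilde s(t,y)^2=\sum_k w_k\,\sigma^k(t)^2$ becomes a ratio of Gaussian exponentials in $y$ whose variances $V^k(t)^2$ are now bounded below by $\sigma_0^2\epsilon>0$; using that the $\sigma^i$ are $C^1$, the maps $(t,y)\mapsto V^k(t)^2$ and $(t,y)\mapsto\tilde s(t,y)$ are jointly continuous with a jointly continuous $y$-derivative. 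Hence $\tilde s$ and the drift are bounded (so linear growth is immediate) and Lipschitz in $y$ on each strip $[\epsilon,T]\times[-n,n]$ with constant uniform in $t$, and the classical theorem yields a unique strong solution $X$ on $[\epsilon,T]$. Setting $S_t=e^{X_t}$ gives the unique strong solution of \eqref{eq:dcmix}, automatically positive, and concatenation with the GBM piece finishes the well-posedness claim on $[0,T]$.

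For the Fokker--Planck statement the key is the algebraic identity built into \eqref{nu}: because the denominator of $s^2$ equals $p_t=\sum_k\lambda^k\ell^k_t$, one has $s(t,x)^2x^2p_t(x)=\sum_k\lambda^k\sigma^k(t)^2x^2\ell^k_t(x)$. Since each $\ell^k_t$ is the marginal density of the instrumental process \eqref{lognormal_EDS}--\eqref{sgherlo} and therefore solves its own forward equation $\partial_t\ell^k_t=-\partial_x(\mu x\ell^k_t)+\tfrac12\partial_{xx}(\sigma^k(t)^2x^2\ell^k_t)$, taking the $\lambda^k$-weighted sum shows at once that $p_t$ solves $\partial_t p=-\partial_x(\mu x p)+\tfrac12\partial_{xx}(s^2x^2p)$, i.e. the forward Kolmogorov equation of \eqref{eq:dcmix}, with the point-mass initial datum at $S(0)$. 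Uniqueness of this solution follows from the uniform ellipticity established above together with the probabilistic representation: the law of the strong solution $S_t$ is itself a solution of the same equation, so any solution must coincide with the mixture $\sum_i\lambda^i\ell^i_t$. I expect the delicate points to be the uniform-in-$t$ Lipschitz estimate for $\tilde s$ on $[\epsilon,T]\times[-n,n]$ and the justification that $S_\epsilon$ may be used as an admissible random initial condition, both of which become routine once the reduction to GBM on $[0,\epsilon]$ is in place.
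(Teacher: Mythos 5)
Your proposal cannot be checked line-by-line against this paper, because the paper does not prove Theorem \ref{th:LMDE}: it states the result and attributes it to Brigo and Mercurio \cite{mixture2} (see also \cite{mixtureFX}). Judged against that original argument, your reconstruction is essentially the standard one and is correct in outline: (i) the convexity bound $\min_k \sigma^k(t)^2 \le s(t,x)^2 \le \max_k \sigma^k(t)^2$, so that the hypotheses give uniform ellipticity and boundedness; (ii) the observation that on $[0,\epsilon]$ all $V^i(t)$ coincide, the $\ell^i_t$ collapse to a single lognormal, and \eqref{nu} reduces to the constant $\sigma_0$, so that \eqref{eq:dcmix} is exactly a geometric Brownian motion there --- this is precisely the role of the common-$\sigma_0$ assumption; (iii) on $[\epsilon,T]$, the lower bound $V^k(t)^2 \ge \sigma_0^2\epsilon$ makes the coefficient smooth and locally Lipschitz uniformly in $t$, which together with boundedness (linear growth) gives a global strong solution, glued to the GBM piece at $t=\epsilon$; and (iv) the Fokker--Planck claim by linearity, using the algebraic identity $s(t,x)^2 x^2 p_t(x) = \sum_k \lambda^k \sigma^k(t)^2 x^2 \ell^k_t(x)$ built into \eqref{nu} together with the fact that each $\ell^k_t$ solves its own forward equation. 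This is exactly the mechanism by which the mixture \eqref{mixtureOne} is made consistent with the SDE in the cited works.

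The one step you should tighten is the final uniqueness claim for the Fokker--Planck equation. As written ("the law of the strong solution is itself a solution of the same equation, so any solution must coincide with the mixture") the argument is circular: the fact that two particular functions solve the equation does not by itself exclude other solutions, and concluding that every solution equals the law of $S_t$ is precisely the uniqueness statement you are trying to prove. What pathwise uniqueness actually gives you (via Yamada--Watanabe) is uniqueness in law, hence that the marginal density of $S_t$ is uniquely determined and equals the mixture; to upgrade this to uniqueness of solutions of the PDE you need either a superposition-type result identifying Fokker--Planck solutions (in a stated class) with laws of weak solutions of the SDE, or a direct PDE uniqueness argument --- which is available here, since in your log coordinates the equation is uniformly parabolic with bounded, $C^1$-in-space coefficients for $t\ge\epsilon$, and on $[0,\epsilon]$ it is the explicitly solvable GBM equation. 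This is a fixable imprecision rather than a fatal flaw, and it concerns a clause that the theorem itself states without specifying the uniqueness class; the rest of your proof, including the treatment of the degenerate behaviour at $t=0$ and the random initial condition $S_\epsilon$, is sound.
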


An important consequence of the above construction is that European option prices on $S$ can be written as linear combinations of Black-Scholes prices with weights $\lambda^i$. The same combination holds for the Greeks at time $0$.

\subsection{Combining mixture dynamics on several assets: SCMD}

Consider now $n$ different asset prices $S_1 \dots S_n$ each calibrated to an LMD model, as in equation (\ref{eq:dcmix}), and denote by $\lambda_i^{k}$, $\sigma_i^{k}$ the parameters relative to the $k$-th instrumental process of the asset $i$. There are two possible ways in order to connect the dynamics of the single assets into a multivariate model. The first more immediate way consists in introducing a non-zero quadratic covariation between the Brownian motions driving the LMD models of equation (\ref{eq:dcmix}) for $S_1 \dots S_n$ leading to the so-called SCMD model. 
 
\begin{definition} {\bf SCMD Model}.
We define the Simply Correlated multivariate Mixture Dynamics (SCMD) model
for $\underline{S}= [S_1,\ldots,S_n]$  as a vector of univariate LMD models, each satisfying Theorem
\ref{th:LMDE} with diffusion coefficients $s_1,\ldots,s_n$ given by equation  \eqref{eq:dcmix} and densities $\ell_1,\ldots,\ell_n$ applied to each asset, and connected simply through quadratic covariation $\rho_{ij}$ between the Brownian motions driving assets $i$ and $j$.
This is equivalent to the following $n$-dimensional diffusion process where we keep the $W$'s independent and where we embed the Brownian covariation into the diffusion matrix $\bar{C}$, whose $i$-th row we denote by $\bar{C}_i$:
\begin{equation}\label{edsSCMD1}
d\underline{S}(t) = diag(\underline{\mu}) \underline{S}(t)dt +
diag(\underline{S}(t))
\bar{{C}}(t,\underline{S}(t))d\underline{W}(t), \ \ \  \bar{a}_{i,j}(t,\underline{S}) := \bar{{C}}_{i} \bar{{C}}_{j}^T
\end{equation}
\begin{equation}\label{aSCMD}
\bar{a}_{i,j}(t,\underline{S}) = s_i(t,S_i) s_j(t,S_j)  \rho_{ij}= \left(\frac{\sum_{k=1}^N
\lambda_i^{k} \sigma_i^k(t)^2
\ell_{i,t}^k(S_i)}{\sum_{k=1}^N \lambda_{i}^{k}
\ell_{i,t}^k(S_i)} \ \  \frac{\sum_{k=1}^N \lambda_j^k
\sigma_{j}^k(t)^2 \ell_{j,t}^k(S_j)}{\sum_{k=1}^N
\lambda_{j}^k \ell_{j,t}^k(S_j)}\right)^{1/2} \rho_{ij}
\end{equation}
where $T$ represents the transposition operator.
\end{definition}

\noindent{\bf Assumption.} We assume $\rho=(\rho_{ij})_{i,j}$ to be positive definite.

\qquad

It is evident from the previous construction that the SCMD is consistent with both the dynamics of the single assets $S_i$ and the instantaneous correlation matrix $\rho$. Moreover, we can easily simulate a path of $S$ by exogenously computing $\rho$ for example from historical data, assuming it constant over time and applying a naive Euler scheme.
However an explicit expression for the density of $\underline{S}= [S_1,\ldots,S_n]$ under the SCMD dynamics is not available. As a consequence, if we aim at computing prices of options whose payoff depends on the value at time $T$ only we still need to simulate entire paths of $\underline{S}$ over the interval $[0,T]$, which can be quite time consuming. 

\subsection{Lifting the mixture dynamics to asset vectors: MVMD}

A different approach, still consistent with the single assets' dynamics, lies in merging the dynamics of the single assets in such a way that the mixture property is lifted to the multivariate density and the corresponding model gains some further tractability property with respect to the SCMD model. This can be achieved by mixing in all possible ways the densities of the instrumental processes of each individual asset and by imposing the correlation structure $\rho$ at the level of the single instrumental processes, rather than of the assets as we did for the SCMD model. This has important consequences on the actual structure of the correlation, see \cite{MVMDWorkingPaper}. Below we summarize the construction leading to the MVMD model, while referring to Brigo et al. \cite{MVMD} for further details.

Assume we have calibrated an LMD model for each $S_i(t)$:
if $p_{S_i(t)}$ is the density of $S_i$, we write
\begin{equation}
p_{S_i(t)}(x) = \sum_{k=1}^{N_i} \lambda_{i}^{k} \ell_{i,t}^k(x),
\hspace{.2cm} \mbox{with} \hspace{.2cm} \lambda_{i}^k \ge 0, \forall
k \hspace{.2cm} \mbox{and} \hspace{.2cm} \sum_k \lambda_{i}^k = 1,
\label{individualDensity}
\end{equation}
where $(\ell_{i,t}^k)_{k}$ are the densities of $(Y_i^{k})_k$, instrumental processes for $S_i$
evolving lognormally according to the stochastic differential
equation:
\begin{equation}\label{edsYi_k}
dY_i^{k}(t) = \mu_i Y_i^{k}(t) dt + \sigma_i^{k}(t) Y_i^{k}(t)
dZ_i(t),\ \ \ d \langle Z_i, Z_j\rangle_t = \rho_{ij} dt,\ \ \ Y_i^k(0)=S_i(0).
\end{equation}
For notational simplicity we assume the number of base
densities $N_i$ to be the same, $N$, for all assets. The exogenous
correlation structure $\rho_{ij}$ is given by the symmetric,
positive--definite matrix $\rho$.

Denote by ${\underline{S}}(t)
= \left[S_1(t),\cdots,S_n(t)\right]^T$ the vector of asset prices with
\begin{equation}\label{edsMVMD}
d\underline{S}(t) = diag(\underline{\mu}) \underline{S}(t)dt +
diag(\underline{S}(t))
{A}(t,\underline{S}(t))d\underline{W}(t).
\end{equation}
As we did for the one dimensional case, we look for a matrix $A$ 
such that
\begin{equation}
p_{\underline{S}(t)}(\underline{x})=
\sum_{k_1,k_2,\cdots k_n=1}^N
\lambda_1^{k_1} \cdots \lambda_n^{k_n}
\ell_{1,\ldots,n; t}^{k_1,\ldots,k_n}(\underline{x}), \ \ \ \ \ell_{1,\ldots,n;t}^{k_1,\ldots,k_n}(\underline{x}) := p_{\left[Y_1^{k_1}(t),...,Y_n^{k_n}(t)\right]^T}(\underline{x}),
\label{mixture_k}
\end{equation}
or more explicitly
\begin{eqnarray}
\ell_{1,\ldots,n; t}^{k_1,\ldots,k_n}(\underline{x}) =
\frac{1}{(2 \pi)^{\frac{n}{2}} \sqrt{\det \Xi^{(k_1 \cdots k_n)}(t)}
\Pi_{i=1}^n x_i} & & \exp\left[ -\frac{ \tilde{x}^{(k_1 \cdots k_n)
T} \Xi^{(k_1 \cdots k_n)}(t)^{-1} \tilde{x}^{(k_1 \cdots k_n)} }{2}
\right] \nonumber\label{multivariateGaussian}
\end{eqnarray}
where $\Xi^{(k_1 \cdots k_n)}(t)$ is the integrated covariance
matrix whose $(i,j)$ element is
\begin{equation}
\Xi_{ij}^{(k_1 \cdots k_n)}(t) = \int_0^t \sigma_i^{k_i}(s)
\sigma_j^{k_j}(s) \rho_{ij} ds \label{covarianceMatrix}
\end{equation}
\begin{equation}\label{xtild_dimn}
\tilde{x}_i^{(k_1 \cdots k_n)}=\ln
 x_i - \ln x_i(0) - \mu_i t + \int_0^t \frac{\sigma_i^{k_i^2}(s)}{2} ds.
\end{equation}

Computations show that if a solution exists, this must satisfy the definition below.

\begin{definition}\label{DefMVMD} {\bf MVMD Model}. 
The (Lognormal) Multi Variate Mixture Dynamics (MVMD) model is given by 
\begin{eqnarray}\label{edsMVMDwithindBM1}
d\underline{S}(t) &=&  diag(\underline{\mu})\ \underline{S}(t)\ dt +
diag(\underline{S}(t))\
{C}(t,\underline{S}(t)) B \ d\underline{W}(t), \\  \nonumber
 C_i(t,\underline{x}) &:=&\frac{\sum_{k_1,...,k_n=1}^{N} \lambda_1^{k_1}...\lambda_n^{k_n}\ \sigma_i^{k_i}(t)  \
\ell_{1,\ldots,n;t}^{k_1,\ldots,k_n}(\underline{x})}{\sum_{k_1,...,k_n=1}^{N}
\lambda_1^{k_1}...\lambda_n^{k_n}\
\ell_{1,\ldots,n;t}^{k_1,\ldots,k_n}(\underline{x})},
\end{eqnarray}
$\ell_{1,\ldots,n;t}^{k_1,\ldots,k_n}(\underline{x}) := p_{\left[Y_1^{k_1}(t),...,Y_n^{k_n}(t)\right]^T}(\underline{x})$ and defining $B$ such that $\rho=B B^T$, $a = C B (C B)^T$, 
\begin{equation}\label{C}
a_{i,j}(t,\underline{x}) 
= \frac{\sum_{k_1,...,k_n=1}^{N} \lambda_1^{k_1}...\lambda_n^{k_n}\
{V}^{k_1,...,k_n}(t)\
\ell_{1,\ldots,n;t}^{k_1,\ldots,k_n}(\underline{x})}{\sum_{k_1,...,k_n=1}^{N}
\lambda_1^{k_1}...\lambda_n^{k_n}\
\ell_{1,\ldots,n;t}^{k_1,\ldots,k_n}(\underline{x})}
\end{equation}
where
\begin{equation}\label{V}
{V}^{k_1,...,k_n}(t) = 
%(\sigma_i^{k_i}(t) B_i) \  (\sigma_j^{k_j}(t) B_j)^T = 
 \left[\sigma_i^{k_i}(t)\ \rho_{i,j}\
\sigma_j^{k_j}(t)\right]_{i,j = 1,...,n}.
\end{equation}
\end{definition}
From the previous definitions it is evident that the dynamics of the single assets $S_i$ in the SCMD model are Markovian. On the other hand, under the MVMD model, while the dynamics of the whole vector $S$ is Markovian, those of the single assets are not. This leads to more realistic dynamics.

Under mild assumptions, existence and uniqueness  of a solution can be proved through the following Theorem.
\begin{theorem}
Assume that the volatilities $\sigma_i^{k_i}(t)$ for all $i$ are once continuously differentiable, uniformly bounded from below and above by two positive real numbers  $\tilde{\sigma}$ and $\hat{\sigma}$ respectively, and that they take a common constant value $\sigma_0$ for $t \in [0,\epsilon]$ for a small positive real number $\epsilon$,  namely
\begin{eqnarray*}
&& \tilde{\sigma}=\inf_{t \ge 0} \left( \   \   \    \min_{i=1 \cdots n, k_i=1, \cdots
N} \ \ (
\sigma_i^{k_i}(t) ) \right), \\ 
&& \hat{\sigma}=\sup_{t \ge 0} \left( \ \ \  \max_{i=1 \cdots n, k_i=1 \cdots N} \ \ (
\sigma_i^{k_i}(t) ) \right) \\
&& \sigma_i^{k_i}(t)  = \sigma_0 > 0 \ \ \mbox{for all} \ \ t \in [0, \epsilon].
\end{eqnarray*}
Assume also the matrix $\rho$ to be positive definite. Then the MVMD n-dimensional stochastic differential equation (\ref{edsMVMDwithindBM1})
admits a unique strong solution. The diffusion matrix $a(t,\underline{x})$ in (\ref{C}) is positive definite for all $t$ and $x$.
\end{theorem}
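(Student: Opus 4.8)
The plan is to mirror the one-dimensional argument behind Theorem \ref{th:LMDE}, reducing the multivariate SDE (\ref{edsMVMDwithindBM1}) to a setting where the classical criteria for a unique strong solution apply, namely local Lipschitz continuity together with linear growth of the coefficients. I would organise the proof around two essentially independent facts: the algebraic positive-definiteness and uniform ellipticity of $a$, and the analytic regularity of the map $\underline{x}\mapsto diag(\underline{x})\,C(t,\underline{x})B$, with the degeneracy at $t=0$ handled separately.

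First I would dispatch the positive-definiteness claim, which is purely linear-algebraic. Writing $D_{k_1,\ldots,k_n}(t)=diag(\sigma_1^{k_1}(t),\ldots,\sigma_n^{k_n}(t))$, the matrix in (\ref{V}) factors as $V^{k_1,\ldots,k_n}(t)=D_{k_1,\ldots,k_n}(t)\,\rho\,D_{k_1,\ldots,k_n}(t)$. Since $\rho$ is positive definite and each $D_{k_1,\ldots,k_n}(t)$ is invertible (its diagonal entries lie in $[\tilde\sigma,\hat\sigma]$ with $\tilde\sigma>0$), every $V^{k_1,\ldots,k_n}(t)$ is positive definite and
\[
\tilde\sigma^2\,\lambda_{\min}(\rho)\,|\xi|^2 \le \xi^T V^{k_1,\ldots,k_n}(t)\,\xi \le \hat\sigma^2\,\lambda_{\max}(\rho)\,|\xi|^2 .
\]
The matrix $a$ in (\ref{C}) is, for each fixed $(t,\underline{x})$, a convex combination of the $V^{k_1,\ldots,k_n}(t)$ with the nonnegative weights
\[
w_{k_1,\ldots,k_n}(t,\underline{x}) = \frac{\lambda_1^{k_1}\cdots\lambda_n^{k_n}\,\ell_{1,\ldots,n;t}^{k_1,\ldots,k_n}(\underline{x})}{\sum_{h_1,\ldots,h_n}\lambda_1^{h_1}\cdots\lambda_n^{h_n}\,\ell_{1,\ldots,n;t}^{h_1,\ldots,h_n}(\underline{x})},
\]
which sum to one. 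Hence $a$ inherits both positive-definiteness and the same uniform two-sided bounds, so it is uniformly elliptic on $[0,\infty)\times(0,\infty)^n$; the full diffusion matrix $diag(\underline{x})\,a\,diag(\underline{x})$ degenerates only at the boundary of the orthant, exactly as for geometric Brownian motion. Uniform ellipticity also guarantees that the symmetric square root $CB=a^{1/2}$ is as smooth in $\underline{x}$ as $a$ itself.

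Next I would treat regularity and growth. For every $t>0$ the base densities $\ell_{1,\ldots,n;t}^{k_1,\ldots,k_n}$ are strictly positive and $C^\infty$ on $(0,\infty)^n$, and the weighted sum in the denominator of $C$ is bounded away from zero on compact subsets; therefore $a$, its square root $CB$, and $diag(\underline{x})\,CB$ are all locally Lipschitz on the open orthant. Linear growth follows from the uniform bound of the previous step: $\|a(t,\underline{x})\|\le\hat\sigma^2\lambda_{\max}(\rho)$ gives $\|CB\|\le\hat\sigma\sqrt{\lambda_{\max}(\rho)}$ and hence $|diag(\underline{x})\,C(t,\underline{x})B|\le\hat\sigma\sqrt{\lambda_{\max}(\rho)}\,|\underline{x}|$. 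The delicate point is that the local Lipschitz estimates degenerate as $t\downarrow 0$: the integrated covariances (\ref{covarianceMatrix}) vanish, every $\ell_{1,\ldots,n;t}^{k_1,\ldots,k_n}$ concentrates at the common starting point $\underline{S}(0)$, and the ratios defining $a$ cease to be Lipschitz.

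This is exactly where I would invoke the hypothesis that all volatilities coincide with the constant $\sigma_0$ on $[0,\epsilon]$. There $V^{k_1,\ldots,k_n}(t)\equiv\sigma_0^2\rho$ for every multi-index, so $a(t,\underline{x})\equiv\sigma_0^2\rho$ is constant and (\ref{edsMVMDwithindBM1}) reduces to a multidimensional geometric Brownian motion, which admits an explicit, pathwise-unique strong solution remaining in $(0,\infty)^n$; this bypasses the $t=0$ singularity and produces a well-defined, orthant-valued datum $\underline{S}(\epsilon)$ whose law is the mixture density. For $t\ge\epsilon$ one has $\Xi^{(k_1,\ldots,k_n)}(t)\succeq\Xi^{(k_1,\ldots,k_n)}(\epsilon)\succ0$, so the densities and their logarithmic derivatives are uniformly controlled and $diag(\underline{x})\,CB$ is genuinely locally Lipschitz with linear growth on $[\epsilon,T]\times(0,\infty)^n$. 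Applying the same existence-and-uniqueness result used for Theorem \ref{th:LMDE}, now in its $n$-dimensional form and with uniform ellipticity ensuring the solution never reaches the boundary, yields a unique strong solution on $[\epsilon,T]$; concatenating it with the geometric Brownian motion on $[0,\epsilon]$ via strong uniqueness and the Markov property gives the unique strong solution on $[0,T]$. I expect the main obstacle to be precisely the quantitative control of the logarithmic derivatives $\partial_{x_j}\log a_{ij}$ near the boundary and for large $|\underline{x}|$ — showing that the ratios of multivariate lognormal densities, whose exponential tails could a priori destroy the Lipschitz property, in fact have derivatives that stay bounded on $[\epsilon,T]$ — together with the verification that the concatenated process solves the SDE in the strong sense across $t=\epsilon$.
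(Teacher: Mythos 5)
The paper itself gives no proof of this theorem: it is imported from the reference \cite{MVMD} (Brigo, Rapisarda and Sridi), so there is no in-paper argument to compare against line by line. Judged on its own merits, your skeleton is the natural one and correctly identifies the role of every hypothesis: the factorization $V^{k_1,\ldots,k_n}(t)=D_{k_1,\ldots,k_n}(t)\,\rho\,D_{k_1,\ldots,k_n}(t)$ giving positive definiteness and the two-sided bounds; the observation that $a$ in (\ref{C}) is a convex combination of these matrices under the lognormal mixture weights, hence uniformly elliptic; the reduction of $[0,\epsilon]$ to a correlated geometric Brownian motion because all $\sigma_i^{k_i}\equiv\sigma_0$ there forces $a\equiv\sigma_0^2\rho$; the lower bound $\Xi^{(k_1,\ldots,k_n)}(t)\succeq\sigma_0^2\epsilon\,\rho\succ 0$ for $t\ge\epsilon$; and the concatenation of the two regimes by pathwise uniqueness.

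The gap is in the final step, and you flag it yourself. Two issues. First, ``uniform ellipticity ensuring the solution never reaches the boundary'' is not a valid inference: what is uniformly elliptic is $a$, not the true diffusion matrix $diag(\underline{x})\,a(t,\underline{x})\,diag(\underline{x})$ of the SDE, which degenerates on $\partial(0,\infty)^n$; even for plain geometric Brownian motion, positivity comes from the multiplicative structure, not from ellipticity. Second, with only local Lipschitz continuity on the \emph{open} orthant, the standard theorem yields a solution only up to the exit time of the orthant, so the boundary/tail analysis you postpone as ``the main obstacle'' is genuinely load-bearing in your formulation — the proof is not complete without it. The clean repair, which makes that obstacle disappear, is a logarithmic change of variables on $[\epsilon,T]$: setting $R_i=\log S_i$, the transformed SDE has drift $\mu_i-\frac{1}{2}a_{ii}(t,e^{\underline{R}})$ and diffusion rows $[C(t,e^{\underline{R}})B]_i$, which are globally bounded (by your ellipticity bounds, since $a_{ii}\le\hat\sigma^2$) and locally Lipschitz on all of $\mathbb{R}^n$, because for $t\ge\epsilon$ the mixture densities are nondegenerate Gaussians in $\underline{R}$. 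The locally-Lipschitz-plus-linear-growth theorem then applies on the whole space, with no boundary in sight; $\underline{S}=e^{\underline{R}}$ solves (\ref{edsMVMDwithindBM1}), stays in the orthant, and any orthant-valued solution of (\ref{edsMVMDwithindBM1}) arises this way, giving pathwise uniqueness. With that substitution your concatenation closes the proof, and no estimate of $\partial_{x_j}\log a_{ij}$ near the boundary or in the tails is needed. A minor further caveat: $CB$ as defined in the paper need not equal the symmetric root $a^{1/2}$; either work with the explicit mixture ratio $C$ of (\ref{edsMVMDwithindBM1}), or state explicitly that you fix the root $CB:=a^{1/2}$ and prove the theorem for that version of the equation.
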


\section{Introducing a shift in MVMD}
When modelling a one dimensional asset price through an LMD model, implied volatilities with minimum exactly at a strike equal to the forward asset price are the only possible. In order to gain greater flexibility and therefore move the smile minimum point from the ATM forward we can shift the overall density by a deterministic
function of time, carefully chosen in order to preserve risk--neutrality and therefore guarantee no--arbitrage. 
This is the
so--called {\em shifted lognormal mixture dynamics model} \cite{mixtureFX}. Under this model the new asset-price process $S$ is defined as
\begin{equation}\label{eq:shiftedLMD}
 S_t= \beta e^{\mu t} +X_t
\end{equation}
with $\beta$ real constant and $X_t$ satisfying (\ref{eq:dcmix}). Under the assumption $K- \beta e^{\mu T}>0$ the price at time $0$ of a European call option with strike $K$ and maturity $T$ can be written as
\begin{equation}
P(0,T) \mathbb{E}^T\{ (S_T-K)^{+}\}=P(0,T) \mathbb{E}^T\{ (X_T-[K- \beta e^{\mu T}])^{+}\} 
\end{equation}
and thus as a combination of Black and Scholes prices with strike $K- \beta e^{\mu T}$. The model therefore preserves the same level of tractability as in the non shifted case with the advantage of gaining more flexibility.

Once each asset is calibrated to a shifted LMD model, we have two possibilities for reconstructing the dynamics of the multidimensional process. The first possibility is to reconnect the single assets by introducing a non-zero quadratic covariation between the Brownian motions (as we did for the SCMD model), leading to what we call the \textit{shifted SCMD model}. 
The second possibility going on the same lines as the approach leading to the MVMD model, lies in applying the same shift $\beta_i e^{\mu_i t}$ to each instrumental process $Y_i^k$ of each asset $X_i$ 
$$S_i^k(t)=Y_i^k(t)+\beta_i e^{\mu_i t}$$
where $Y_i^k$ satisfies the dynamics in (\ref{edsYi_k}) (this is equivalent to applying the shift $\beta_i e^{\mu_i t}$ directly to the $i$-th asset)
and then mix the corresponding densities $p_{S_i^k(t)}(x)$ in all possible ways. Computations similar to those for the non-shifted case show that if a solution exists, it must satisfy the definition below (details on the computations are shown in the Appendix).

\begin{definition}{\bf Shifted MVMD Model}\label{def:ShiftedMVMD}.
The shifted Multi Variate Mixture Dynamics model is given by
\begin{eqnarray}\label{edsShiftedMVMD}
d \underline{S}(t)&=&diag(\underline{\mu})\underline{S}(t) dt + diag(\underline{S}(t)) \widetilde  C(t, \underline{S}(t))B d \underline{W}(t), \\  \nonumber
\widetilde  C_i(t,\underline{x}) &:=&\frac{\sum_{k_1,...,k_n=1}^{N} \lambda_1^{k_1}...\lambda_n^{k_n}\ \sigma_i^{k_i}(t) (x_i-\beta_ie^{\mu_i t}) \
\tilde \ell_{1,\ldots,n;t}^{k_1,\ldots,k_n}(\underline{x})}{x_i \sum_{k_1,...,k_n=1}^{N}
\lambda_1^{k_1}...\lambda_n^{k_n}\
\tilde \ell_{1,\ldots,n;t}^{k_1,\ldots,k_n}(\underline{x})},
\end{eqnarray}
\begin{equation}\label{ell tilde}
\tilde{\ell}_{1, \dots n;t}^{k_1, \dots , k_n} (\underline{x}) = p_{[S_1^{k_1}(t), \dots , S_n^{k_n}(t)]^T}( \underline{x})=\ell_{1,\ldots,n;t}^{k_1,\ldots,k_n}(\underline{x}-\underline{\beta} e^{\underline{\mu} t})
\end{equation}
and defining $B$ such that $\rho=B B^T$, $\tilde a=\widetilde  C B (\widetilde  C B)^T$,
\begin{equation}\label{CShiftedMVMD}
\tilde a_{ij}(t,\underline{x})=\frac{ \sum_{k_1,k_2,\dots k_n = 1}^N \lambda_1^{k_1} \cdots \lambda_n^{k_n}  \
{V}^{k_1,...,k_n}(t) (x_i-\beta_i e^{\mu_i t})(x_j-\beta_j e^{\mu_j t}) \tilde{\ell}_{1, \dots n;t}^{k_1, \dots , k_n} (\underline{x})}{x_i x_j \sum_{k_1,k_2,\dots k_n = 1}^N \lambda_1^{k_1} \cdots \lambda_n^{k_n} \tilde{\ell}_{1, \dots n;t}^{k_1, \dots , k_n} (\underline{x})} 
\end{equation}
with ${V}^{k_1,...,k_n}$ as in (\ref{V}).
\end{definition}

We now have all the instruments to introduce the correlation skew and study its behaviour under shifted SCMD and shifted MVMD dynamics.

\section{The correlation skew}
The aim of this section is to introduce a definition of \textit{correlation skew} and to study its behaviour under shifted MVMD dynamics, in comparison with the correlation skew under shifted SCMD dynamics.
It is observed in practice under normal market conditions that assets are relatively weakly correlated with each other. However during periods of market stress stronger correlations are observed. This fact suggests that a single correlation parameter for all options quoted on a basket of assets, or an index, say, may not be sufficient to reproduce all option prices on the basket/index for a given expiry. In fact, this is what is observed empirically when inferring a multidimensional dynamics from a set of single--asset dynamics. Among others, this has been shown in Bakshi et al. \cite{bakshi} for options on the S\&P 100 index and in Langnau \cite{langnau} for options on the Euro Stoxx 50 index and on the DAX index.

When computing the implied volatility, European call prices (or equivalently put prices) are considered and the reference model is the benchmark Black \& Scholes \cite{black_scholes} model. It seems then natural to consider as multidimensional benchmark a model where the single assets follow geometric Brownian motions and constant correlation among the single Brownian shocks is introduced.
However, when moving from the one-dimensional to the multidimensional framework a bigger variety of possible option instruments to use in order to compare prices under the reference model and the model under analysis appears, the particular choice depending on the specific product we are interested in. 
Austing \cite{Austing} recently provided a discussion on some of the most popular multi-assets products suggesting the use of composite options as benchmark on which defining the implied correlation.
In this paper we adopt a different approach based on the comparison with options on $S_1(t)$, $S_2(t)$ with payoff 
\begin{equation}\label{eq:payoffBasket}
\left(S_1(T)S_2(T)-K\right)^{+}.
\end{equation}

Assume that the pair $(S_1,S_2)$ follows a bi-dimensional Black and Scholes model, in other words $S_1$ and $S_2$ follow two geometric Brownian motions with correlation $\rho$ and consider the payoff in equation (\ref{eq:payoffBasket}). Given the Black and Scholes implied volatilities for $S_1$ and $S_2$, the value $\rho_{impl}$ such that prices under the bi-dimensional Black and Scholes model are the same as market prices
\begin{equation*}
\mbox{MKT\_Prices}(S_1(0),S_2(0),K,T)=\mbox{BS\_Prices}(S_1(0),S_2(0),K,T, \rho_{impl}(K,T))
\end{equation*}
is called \textit{implied correlation}. If we try to match option prices for a given maturity $T$ and two different strikes $K_1$, $K_2$, we will observe two different values of the implied correlation. This is contrary to the hypothesis of constant correlation in the bi-dimensional Black and Scholes model. 

The curve $K \to \rho_{impl}(K,T)$ is called correlation skew. Thus, the correlation skew can be considered as a descriptive tool/metric similar to the volatility smile in the one-dimensional case, with the difference that it primarily describes implied dependence instead of volatility.

\subsection{Explaining the skew in MVMD with the single parameter $\rho$ via MUVM} \label{sec:CorrSkewMVMD}
The aim of this section is to introduce a definition of implied correlation under shifted MVMD dynamics, when using options with payoff as in equation (\ref{eq:payoffBasket}). This leads to a straightforward application in the foreign exchange market within the study of triangular relationships. Imagine, for example, that $S_1$ and $S_2$ represent the exchange rates USD/EUR and EUR/JPY, respectively. The cross asset $S_3=S_1 S_2$ then represents the USD/JPY exchange rate, and the corresponding payoff in equation (\ref{eq:payoffBasket}) is the payoff of a call option on the USD/JPY FX rate. In the following, we will investigate whether the shifted MVMD model is able to consistently reproduce the implied volatility of $S_3$, once the single components $S_1$, $S_2$ are calibrated to univariate shifted LMD models. Consistency properties of this kind are important, for example, in order to reconstruct the time series of less liquid cross currency pairs from more liquid ones.

Before proceeding we make a remark on the interpretation of $\rho$. Keeping in mind the definition of instantaneous local correlation in a bivariate diffusion model
\begin{equation*}
\rho_L(t) := \frac{d \langle S_1, S_2 \rangle_t}{\sqrt{d \langle S_1, S_1 \rangle_t \ d \langle S_2, S_2 \rangle_t } } 
\end{equation*}
and making use of Schwartz's inequality, we obtain that the absolute value of the local correlation under the shifted MVMD model is smaller than the value under the shifted SCMD model. The result is contained in the Proposition below. 

\begin{proposition}[Local correlation in shifted MVMD and shifted SCMD]\label{Prop:LocalCorrelation}
The instantaneous local correlation under the shifted SCMD model is $\rho$, whereas for the shifted MVMD model we have
\begin{align*}
\rho_L(t) &=\frac{
\rho \sum_{k,k'=1}^{N} {\lambda_1}^k {\lambda_2}^{k'} \sigma_1^{(k)}
\sigma_2^{(k')}  \tilde \ell_t^{(kk')}(x_1,x_2)
}
{
\sqrt{
\left(
\sum_{k,k'=1}^{N} {\lambda_1}^k {\lambda_2}^{k'}
\sigma_1^{(k)2} \tilde \ell_t^{(kk')}(x_1,x_2)
\right)
\left(
\sum_{k,k'=1}^{N} {\lambda_1}^k {\lambda_2}^{k'}
\sigma_2^{(k')2} \tilde \ell_t^{(kk')}(x_1,x_2)
\right)
}
}, \cr
\vert \rho_L(t) \vert &\le \rho
\end{align*}
where $\tilde \ell_t^{(kk')}(x_1,x_2)$ is defined as in equation (\ref{ell tilde}).
\end{proposition}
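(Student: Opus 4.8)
The plan is to read off both local correlations directly from the diffusion matrices of the two models and then to reduce the inequality to Cauchy--Schwarz. First I would observe that for any diffusion of the form $d\underline{S}=\mathrm{diag}(\underline{\mu})\underline{S}\,dt+\mathrm{diag}(\underline{S})\,M\,d\underline{W}$ the quadratic covariations are $d\langle S_i,S_j\rangle_t=S_iS_j\,(MM^T)_{ij}\,dt$. Taking $M=\widetilde{C}B$ for the shifted MVMD and $M=\bar{C}$ for the shifted SCMD, the scalar prefactors $S_i,S_j$ cancel in the ratio defining $\rho_L$, so in both cases
\begin{equation*}
\rho_L(t)=\frac{(MM^T)_{12}}{\sqrt{(MM^T)_{11}\,(MM^T)_{22}}}=\frac{a_{12}}{\sqrt{a_{11}\,a_{22}}},
\end{equation*}
with $a=MM^T$ equal to $\tilde{a}$ of (\ref{CShiftedMVMD}) for MVMD and to $\bar{a}$ of (\ref{aSCMD}) for SCMD. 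This is the common starting point for both claims.

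For the SCMD statement I would substitute $\bar{a}_{ij}=s_i(t,S_i)\,s_j(t,S_j)\,\rho_{ij}$ from (\ref{aSCMD}); in the shifted case each $s_i$ additionally carries the factor $(S_i-\beta_ie^{\mu_it})/S_i$, but this plays no role. Since $\rho_{11}=\rho_{22}=1$ and $\rho_{12}=\rho$, the factors $s_1,s_2$ (together with any shift factors) cancel identically in $\bar{a}_{12}/\sqrt{\bar{a}_{11}\bar{a}_{22}}$, leaving $\rho_L(t)=\rho$.

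For the MVMD statement I would specialize (\ref{CShiftedMVMD}) to $n=2$, writing $k_1=k$, $k_2=k'$, and using $V^{kk'}_{11}=(\sigma_1^{(k)})^2$, $V^{kk'}_{22}=(\sigma_2^{(k')})^2$, $V^{kk'}_{12}=\sigma_1^{(k)}\rho\,\sigma_2^{(k')}$ from (\ref{V}). Each entry then factors as a shift factor times $A_{ij}/D$, where $D=\sum_{k,k'}\lambda_1^{k}\lambda_2^{k'}\tilde\ell_t^{(kk')}(x_1,x_2)$ is the common normalizing sum and $A_{12}$, $A_{11}$, $A_{22}$ are the weighted sums of $\sigma_1^{(k)}\sigma_2^{(k')}$, $(\sigma_1^{(k)})^2$, $(\sigma_2^{(k')})^2$ respectively. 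Forming $A_{12}/\sqrt{A_{11}A_{22}}$, the normalizer $D$ and the shift factors $(x_i-\beta_ie^{\mu_it})/x_i$ all cancel, producing exactly the stated expression for $\rho_L(t)$. The only genuine inequality is then the last line: introducing the nonnegative weights $w_{kk'}:=\lambda_1^{k}\lambda_2^{k'}\tilde\ell_t^{(kk')}(x_1,x_2)\ge0$ as an unnormalized discrete measure on index pairs $(k,k')$, I would set $u_{kk'}=\sigma_1^{(k)}$, $v_{kk'}=\sigma_2^{(k')}$ and $\langle f,g\rangle_w=\sum_{k,k'}w_{kk'}f_{kk'}g_{kk'}$, so that $\rho_L=\rho\,\langle u,v\rangle_w/\sqrt{\langle u,u\rangle_w\,\langle v,v\rangle_w}$; Cauchy--Schwarz gives $|\langle u,v\rangle_w|\le\sqrt{\langle u,u\rangle_w\,\langle v,v\rangle_w}$, whence $|\rho_L(t)|\le|\rho|$, i.e. $|\rho_L(t)|\le\rho$ under the standing assumption $\rho\ge0$ (and in fact $0\le\rho_L\le\rho$, since all volatilities are positive and hence $\langle u,v\rangle_w\ge0$).

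There is no deep obstacle here; the bulk of the work is bookkeeping the cancellations of $D$ and of the shift factors. The one point that requires care is that $u$ depends only on $k$ and $v$ only on $k'$, while the coupling between the two indices lives entirely in the weight $w_{kk'}$ through the joint density $\tilde\ell_t^{(kk')}$; consequently the sums do \emph{not} factor, and the bound must be obtained by applying Cauchy--Schwarz to the full two-index sum rather than to a product of one-index sums. I would also flag the sign convention on the shift factors: for the stated formula to carry $\rho$ (rather than $|\rho|$) in the numerator one needs $x_i-\beta_ie^{\mu_it}>0$, which is the multivariate counterpart of the standing assumption $K-\beta e^{\mu T}>0$ used for the shifted LMD.
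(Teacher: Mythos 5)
Your proposal is correct and follows essentially the same route the paper indicates: read the local correlation off the diffusion matrices $\tilde a$ and $\bar a$, observe that the shift factors $(x_i-\beta_i e^{\mu_i t})/x_i$ and the common normalizer cancel in the ratio $\tilde a_{12}/\sqrt{\tilde a_{11}\tilde a_{22}}$, and obtain the bound from Schwartz's (Cauchy--Schwarz) inequality applied to the weighted two-index sum, exactly as the paper's sketch prescribes. Your two flagged caveats --- that the coupling through $\tilde\ell_t^{(kk')}$ prevents factorization of the sums, and that the statement $\vert\rho_L\vert\le\rho$ implicitly reads $\vert\rho\vert$ (or assumes $\rho\ge 0$ and $x_i-\beta_i e^{\mu_i t}>0$) --- are accurate and consistent with the paper's implicit conventions.
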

We see that $\rho$ enters the formula for  the instantaneous local correlation $\rho_L$ in the MVMD model, even though the latter is more complex than the constant value $\rho$.  
Our aim is to find a value of $\rho$ matching prices of options with payoff as in equation (\ref{eq:payoffBasket}) under shifted MVMD dynamics with market prices.

In order to do that we will make use of a model with uncertain parameters of which the shifted MVMD is a Markovian projection. Indeed, as shown in Brigo et al. \cite{MVMD} the MVMD model as in Definition \ref{DefMVMD} (without shift) is a Markovian projection of the model defined below

\begin{equation}\label{edsuncertain_volatily_model}
d\xi_i(t) = \mu_i\ \xi_i(t) dt + \sigma^{I_i}_i(t)\ \xi_i(t) dZ_i(t),\ \ i
= 1,...,n,
\end{equation}

where each $Z_i$ is a
standard one dimensional Brownian motion with $d \left \langle Z_i,
Z_j \right \rangle_t = \rho_{i,j} dt$, $\mu_i$ are
constants, $\sigma^I:= [\sigma^{I_1}_1,\ldots,\sigma^{I_n}_n]^T$ is a random vector independent of $Z$ and representing uncertain volatilities with 
$I_1,\ldots,I_n$ mutually independent.
More specifically,
each $\sigma^{I_i}_i$ takes values in a set of $N$ deterministic functions
$\sigma_i^k$ with probability $\lambda_i^k$. Thus, for all times in $(\varepsilon,+\infty)$ with small $\varepsilon$ we have
\begin{equation*}
(t \longmapsto \sigma^{I_i}_i(t)) = \left\{\begin{array}{l} (t \longmapsto
\sigma_i^1(t))\ \  \mbox{with} \ \mathbb{Q} \ \ \mbox{probability}\ \  \lambda_i^1 \\
(t \longmapsto
\sigma_i^2(t))\ \   \mbox{with} \ \mathbb{Q} \ \ \mbox{probability}\ \  \lambda_i^2 \\
\vdots \\
(t \longmapsto \sigma_i^N(t))\ \  \mbox{with} \ \mathbb{Q} \ \ \mbox{probability}\ \
\lambda_i^N
\end{array} \right.
\end{equation*}

Now it is straightforward to show that if we add a shift to each component as follows
\begin{equation}\label{uncertain_volatily_model-shift}
\tilde \xi_i(t)=\xi_i(t)+\beta_i e^{\mu_i t}
\end{equation}
we obtain a model having the shifted MVMD model (\ref{edsShiftedMVMD})-(\ref{CShiftedMVMD}) as Markovian projection. This can be easily shown by Gy\"{o}ngy's lemma \cite{gyongi}.

\begin{theorem}\label{Theo:MarkovianProjectionShiftedModels}
The shifted MVMD model is a Markovian projection of the shifted MUVM model.
\end{theorem}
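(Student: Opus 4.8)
The plan is to apply Gy\"ongy's lemma \cite{gyongi} to the shifted MUVM process $\tilde\xi=(\tilde\xi_1,\dots,\tilde\xi_n)$ of \eqref{uncertain_volatily_model-shift}, following the same route used for the non-shifted model in Brigo et al.\ \cite{MVMD}. First I would make the dynamics of $\tilde\xi$ explicit: differentiating \eqref{uncertain_volatily_model-shift} and substituting $\xi_i=\tilde\xi_i-\beta_i e^{\mu_i t}$ into \eqref{edsuncertain_volatily_model} should give
\begin{equation*}
d\tilde\xi_i(t)=\mu_i\,\tilde\xi_i(t)\,dt+\sigma_i^{I_i}(t)\bigl(\tilde\xi_i(t)-\beta_i e^{\mu_i t}\bigr)\,dZ_i(t),\qquad i=1,\dots,n,
\end{equation*}
so that the shift keeps the drift linear in $\tilde\xi_i$ while the diffusion coefficient acquires the factor $\tilde\xi_i-\beta_i e^{\mu_i t}=\xi_i>0$. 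The instantaneous covariation is then $d\langle\tilde\xi_i,\tilde\xi_j\rangle_t=\sigma_i^{I_i}(t)\sigma_j^{I_j}(t)\rho_{ij}(\tilde\xi_i-\beta_i e^{\mu_i t})(\tilde\xi_j-\beta_j e^{\mu_j t})\,dt$.

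Next I would invoke Gy\"ongy's lemma, which yields a Markovian It\^o diffusion $\underline S$ sharing the time-$t$ marginal law of $\tilde\xi$, with drift equal to the conditional expectation of the drift of $\tilde\xi$ and instantaneous covariance matrix equal to the conditional expectation of that of $\tilde\xi$, both given $\{\tilde\xi(t)=\underline x\}$. Since the drift $\mu_i\tilde\xi_i$ is already a deterministic function of $\tilde\xi_i$, the projected drift is $\mu_i x_i$, matching $diag(\underline\mu)\underline S$ in \eqref{edsShiftedMVMD}. Writing the correlated Brownian increments as $dZ=B\,d\underline W$ with $\rho=BB^T$, the whole content of the theorem reduces to identifying the projected covariance
\begin{equation*}
x_i x_j\,\tilde a_{ij}(t,\underline x)=\mathbb{E}\bigl[\sigma_i^{I_i}(t)\sigma_j^{I_j}(t)\rho_{ij}\,(\tilde\xi_i-\beta_i e^{\mu_i t})(\tilde\xi_j-\beta_j e^{\mu_j t})\,\big|\,\tilde\xi(t)=\underline x\bigr]
\end{equation*}
with the expression \eqref{CShiftedMVMD} for $\tilde a_{ij}$.

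To evaluate this I would condition on the index vector $(I_1,\dots,I_n)$. Since the $I_i$ are mutually independent with $\mathbb{Q}(I_i=k)=\lambda_i^k$, and since conditionally on $(I_1,\dots,I_n)=(k_1,\dots,k_n)$ the vector $\xi$ is a correlated geometric Brownian motion, the conditional law of $\tilde\xi(t)$ is the shifted lognormal density $\tilde\ell_{1,\dots,n;t}^{k_1,\dots,k_n}$ of \eqref{ell tilde}. A Bayes computation then gives posterior weights of $(k_1,\dots,k_n)$ given $\{\tilde\xi(t)=\underline x\}$ proportional to $\lambda_1^{k_1}\cdots\lambda_n^{k_n}\tilde\ell_{1,\dots,n;t}^{k_1,\dots,k_n}(\underline x)$, while further conditioning on $\{\tilde\xi(t)=\underline x\}$ makes the integrand deterministic and equal to $\sigma_i^{k_i}(t)\sigma_j^{k_j}(t)\rho_{ij}(x_i-\beta_i e^{\mu_i t})(x_j-\beta_j e^{\mu_j t})$. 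The tower property then reproduces exactly the numerator and denominator of $x_ix_j\,\tilde a_{ij}$ in \eqref{CShiftedMVMD}, identifying the Gy\"ongy projection with the shifted MVMD diffusion matrix.

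The only genuine obstacle is checking the hypotheses under which Gy\"ongy's lemma produces a \emph{strong} Markovian solution with the stated marginals, namely local boundedness, non-degeneracy and non-explosion of the projected coefficients. Here the diffusion factor is $\sigma_i^{I_i}(t)\,\xi_i(t)$, strictly positive because $\xi_i>0$ almost surely and two-sidedly controlled by the uniform bounds $\tilde\sigma,\hat\sigma$ on the $\sigma_i^{k_i}$; the deterministic shift $\beta_i e^{\mu_i t}$ does not disturb these estimates. I would therefore invoke the existence-and-uniqueness result for the (non-shifted) MVMD stochastic differential equation stated after Definition~\ref{DefMVMD}, whose proof carries over to the shifted coefficients \eqref{CShiftedMVMD} essentially unchanged (its one-dimensional prototype being Theorem~\ref{th:LMDE}), to ensure that the projected equation admits a unique strong solution with the prescribed time-$t$ marginal laws, and conclude.
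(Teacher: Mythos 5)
Your proof is correct and follows essentially the same route as the paper's: derive the shifted MUVM dynamics via It\^o's formula, invoke Gy\"ongy's lemma so that the statement reduces to matching the conditional expectation of the instantaneous covariance given $\{\tilde\xi(t)=\underline{x}\}$ with $diag(\underline{x})\,\tilde a(t,\underline{x})\,diag(\underline{x})$, and evaluate that conditional expectation by conditioning on $(I_1,\dots,I_n)$ with posterior weights proportional to $\lambda_1^{k_1}\cdots\lambda_n^{k_n}\tilde\ell_{1,\dots,n;t}^{k_1,\dots,k_n}(\underline{x})$, which is exactly the paper's ratio-of-expectations computation written out as a Bayes step. Your closing verification of the regularity hypotheses (boundedness, non-degeneracy, existence and uniqueness for the projected SDE) is a point the paper leaves implicit, so it strengthens rather than departs from the published argument.
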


\begin{proof}
A straightforward application of Ito's lemma shows that $\tilde \xi(t)$ satisfies the system of SDEs below
\begin{equation}\label{edsscenariomodel-shift} d\underline{\tilde \xi}(t) = diag(\underline{\mu})\ \underline{\tilde \xi}(t)\ dt + diag (\underline{\tilde \xi}(t)-\underline{\alpha}(t))\ {A^I(t)}\
d\underline{W}(t)
\end{equation}
where $diag(\underline{\alpha}(t))$ is a deterministic matrix whose $i$-th diagonal element is the shift $\beta_i e^{\mu_i t}$ and  ${A^I(t)}$ is the Cholesky decomposition of the covariance matrix
$\Sigma^I_{i,j}(t) := \sigma^{I_i}_i(t) \sigma^{I_j}_j(t)\ \rho_{ij}.$

Define $\tilde v(t,\underline{\xi}(t)) = diag(\underline{\tilde \xi}(t)-\underline{\alpha}(t)) {A^I(t)}$. In order to show that the MVMD model is a Markovian projection of the MUVM model, we need to show that 

\begin{equation}\label{eq:Gyongy}
  \mathbb{E}[\tilde v \tilde v^{T}|\underline{\tilde \xi}(t) = \underline{\tilde x}]  = \tilde \sigma \
\tilde \sigma^T(t,\underline{x}).
\end{equation}
where $\tilde \sigma(t,\underline{x})=diag(\underline{x}) \widetilde C(t, \underline{x})B$ and $ \widetilde C$ is defined as in (\ref{edsShiftedMVMD}). 

Observing that 
\begin{eqnarray*}
& &\mathbb{E}[\tilde v \tilde v^{T}|\underline{\xi}(t) \in d \underline{x}] =
\frac{\mathbb{E}[diag(\underline{\tilde \xi}(t)- \underline{\alpha}(t))\ \Sigma\
diag(\underline{\tilde \xi}(t)- \underline{\alpha}(t))\ 1_{\{\underline{\tilde \xi}(t) \in
d\underline{x}\}}]}{\mathbb{E}[1_{\{\underline{\tilde \xi}(t) \in
 d\underline{x}\}}]} =\\
& & \frac{diag(\underline{x}-\underline{\alpha}(t)) \sum_{k_1,...,k_n = 1}^{N}
\lambda_1^{k_1}...\lambda_n^{k_n}\ {V}^{k_1,...,k_n}(t)
\tilde \ell_{1,\ldots,n;t}^{k_1,\ldots,k_n} (\underline{x})\ 
diag(\underline{x}-\underline{\alpha}(t)) \ d\underline{x}}{\sum_{k_1,...,k_n = 1}^{N} \lambda_1^{k_1}...\lambda_n^{k_n}\
\tilde \ell_{1,\ldots,n;t}^{k_1,\ldots,k_n} (\underline{x}) \ d\underline{x} }
\end{eqnarray*}
and performing simple matrix manipulations, equation (\ref{eq:Gyongy}) is easily obtained.
\end{proof}

Since we will infer the value of $\rho$ from prices of options with payoff as in (\ref{eq:payoffBasket}) depending on the value of $(S_1,S_2)$ at time $T$ only, we can make computations under the shifted MUVM rather then the shifted MVMD, as these two models have the same one-dimensional (in time) distributions. Computations under the shifted MUVM model are easier to do (with respect to the shifted MVMD case) since conditioning on $\{I_i=j\}$, $\xi_i$ follows a shifted geometric Brownian motion with volatility $\sigma_i^j$. 

In particular we will focus on the bidimensional specification in which case the shifted MUVM reduces to

\begin{equation}\label{eq:ShiftedMUVM}
\begin{array}{l l}
d S_1(t) = \mu_1\ S_1(t) dt + \sigma^{I_1}_1(t)\ (S_1(t)-\beta_1 e^{\mu_1 t}) dW_1(t)\\
d S_2(t) = \mu_2\ S_2(t) dt + \sigma^{I_2}_2(t)\ (S_2(t)-\beta_2 e^{\mu_2 t}) dW_2(t)
\end{array}
\end{equation}
where the Brownian motions $W_1$, $W_2$ have correlation $\rho$. 

Once we have calibrated $S_1$ and $S_2$ independently, each to a univariate shifted LMD model, we notice that the only parameter missing when computing prices of options having payoff as in (\ref{eq:payoffBasket}) is $\rho$.

\begin{definition}\label{def:ImpliedCorrelation}
We define the \textit{implied correlation parameter} in the shifted MVMD model as the value $\rho$ minimizing the squared percentage difference between implied volatilities from options with payoff (\ref{eq:payoffBasket}) under the  shifted MVMD model and market implied volatilities.
\end{definition}

\subsection{The correlation skew in SCMD via $\rho$} \label{sec:CorrSkewSCMD}
Now, assume to model the joint dynamics of $(S_1,S_2)$ as a shifted SCMD model instead. In this case
\begin{equation}\label{eq:sdeSCMD}
\begin{array}{l l}
d S_1(t) &= \mu S_1(t) dt + \nu_1(t,S_1(t)-\beta_1 e^{\mu t}) (S_1(t)-\beta_1 e^{\mu t}) dW_1(t),  \\
d S_2(t) &= \mu S_2(t) dt + \nu_2(t,S_2(t)-\beta_2 e^{\mu t}) (S_2(t)-\beta_2 e^{\mu t}) dW_2(t)
\end{array} 
\end{equation}
with
\begin{equation}\label{eq:sdeSCMD_nu}
\begin{array}{l l}
\nu_1(t,x) &=  \left(\frac{\sum_{k=1}^N
\lambda_1^{k} \sigma_1^k(t)^2
\ell^k_t(x)}{\sum_{k=1}^N \lambda_1^{k}
\ell^k_t(x)}\right)^{1/2},\\
\nu_2(t,x) &=  \left(\frac{\sum_{k=1}^N
\lambda_2^{k} \sigma_2^k(t)^2
\ell^k_t(x)}{\sum_{k=1}^N \lambda_2^{k}
\ell^k_t(x)}\right)^{1/2}
\end{array} 
\end{equation}
where the Brownian motions $W_1$, $W_2$ have correlation $\rho$. 
In this case the parameter $\rho$ really represents the true value of the instantaneous local correlation, as opposed to the MVMD case. We still define the \textit{implied correlation} as the value $\rho$ minimizing the squared percentage difference between implied volatilities from options with payoff (\ref{eq:payoffBasket}) under the shifted SCMD model and market implied volatilities.

\subsection{Pricing under the shifted MUVM}

Now, we consider computing the price of options such as (\ref{eq:payoffBasket}), namely options on cross FX rates, under the shifted model. In general one has a loss of tractability with respect to the non-shifted case. However, one can still express the price via a semi-analytic formula involving double integration:

\begin{align}\label{eq:PriceBaske}
\small
&e^{-r T} \bE [(B-K)_{+}]= \nonumber\\
&e^{-r T} \sum_{i,j=1}^N \lambda_1^i \lambda_2^j  \int_{K}^{\infty} dB(B-K) \int_{-\infty}^{\log\left(\frac{B/\alpha_2 - \alpha_1}{F_1(T)} \right)+\frac{\Sigma_{1,1}^{i,i}}{2}} d x_1 \frac{n(x_1;0,\Sigma_{1,1}^{i, i})n(D^{i,j}(B,x_1);0,(1-\rho^2) \Sigma_{2,2}^{j,j})}{B-\alpha_2F_1(T)e^{x_1- \Sigma_{1,1}^{i,i}/2}-\alpha_1 \alpha_2}
\end{align}
where $n(x;m,S)$ is the density function of a one-dimensional Gaussian random variable with mean $m$ and standard deviation $S$,
\begin{equation*}
D^{i,j}(B,x_1)=ln\left( \frac{B}{F_1(t)e^{x_1-\frac{\Sigma^{i,i}_{1,1}}{2}}+\alpha_1} - \alpha_2\right) - ln (F_2(t))+\frac{\Sigma^{j,j}_{2,2}}{2}-\rho x_1 \sqrt{\frac{\Sigma^{j,j}_{2,2}}{\Sigma^{i,i}_{1,1}}},
\end{equation*}
and $\Sigma^{i,j}_{h,k}=\sigma_h^i \sigma_k^j T$ for $h,k=1,2$ and $i,j=1,\dots N$. 
This follows from the fact that the density of the product 
$$B=S_1 S_2=(\xi_1+\beta_1e^{\mu_1 T})(\xi_2+\beta_2e^{\mu_2 T})$$
can be written as
\begin{equation}\label{eq:DensityBasket}
p_{B_T}(B)dB=\bQ(B_T\in dB)=\mathbb{E}[1_{\{B_T \in dB\}}]=\sum_{i,j=1}^N \lambda_1^i \lambda_2^j\mathbb{E}\left[1_{\{(\xi_1^i+\beta_1 e^{\mu_1 T}) (\xi_2^j +\beta_2 e^{\mu_2 T})\in dB\}}\right]
\end{equation}
where 
\begin{equation*}
\begin{array}{l l}
d\xi_1(t) = \mu_1\ \xi_1(t) dt + \sigma^{i}_1(t)\ \xi_1(t) dW_1(t)\\
d\xi_2(t) = \mu_2\ \xi_2(t) dt + \sigma^{j}_2(t)\ \xi_2(t) dW_2(t)
\end{array}
\end{equation*}
Now we focus on a single term in the summation (\ref{eq:DensityBasket}) and for simplicity we drop the superscript $i$, $j$ .
Calling $F_1(t)$, $F_2(t)$ the t-forward asset prices and defining $x_i=\ln \frac{\xi_i}{F_i(t)}+\frac{\Sigma_{i,i}}{2}$ we can rewrite the expectation as
\begin{align*}
&\int dx_1 dx_2 1_{\{(F_1(t)e^{x_1-\Sigma_{1,1}/2}+\alpha_1)(F_2(t)e^{x_2-\Sigma_{2,2}/2}+\alpha_2)\in dB \}}n(\underline{x};0,\Sigma)=\\
&\left( -\frac{d}{dB} \int_{D_B} dx_1 dx_2 n(\underline{x};0,\Sigma) \right)dB
\end{align*}
with $\alpha_i = \beta_i e^{\mu_i T}$, where $n(\underline{x};0,\Sigma)$ is the density of a bivariate normal distribution with mean equal to zero and covariance matrix $\Sigma$ defined as below
\begin{equation}
\Sigma=\begin{pmatrix}
\Sigma_{1,1} &    \rho \sqrt{\Sigma_{1,1} \Sigma_{2,2}}  \cr
\rho \sqrt{\Sigma_{1,1} \Sigma_{2,2}}&   \Sigma_{2,2}\cr
\end{pmatrix}
\end{equation}
Observing that $n(\underline{x};0,\Sigma)=n(x_1;0, \Sigma_{1,1})n(x_2-\rho x_1 \sqrt{\Sigma_{2,2}/\Sigma_{1,1}};0,(1-\rho^2)\Sigma_{2,2})$, integrating with respect to $x_2$ and replacing in (\ref{eq:DensityBasket}) we obtain
\begin{equation*}
p_{B_T}(B)=\sum_{i,j=1}^N \lambda_1^i \lambda_2^j \int_{-\infty}^{\log\left(\frac{B/\alpha_2 - \alpha_1}{F_1(T)} \right)+\frac{\Sigma_{1,1}^{i,i}}{2}} dx_1 \frac{n(x_1;0,\Sigma_{1,1}^{i, j})n(D^{i,j}(B,x_1);0,(1-\rho^2) \Sigma_{2,2}^{i,j})}{B-\alpha_2F_1(T)e^{x_1- \Sigma_{1,1}^{i,j}}-\alpha_1 \alpha_2}
\end{equation*}
from which equation (\ref{eq:PriceBaske}) is easily derived.

\section{Comparing correlation skews in shifted MVMD and SCMD}\label{sec:Comparison}
The aim of this section is to compare the shifted MVMD and the shifted SCMD models in terms of implied correlation, analysing their performance in reproducing triangular relationships.

\subsection{Numerical case study with cross FX rates}
Specifically, we consider the exchanges $S_1=$ USD/EUR, $S_2=$ EUR/JPY under a shifted MUVM model with $2$ components 

\begin{align*}
S_1(t)=X_1(t)+\beta_1e^{(r^{\textup{\EUR{}}}-r^{\$})t}\\
S_2(t)=X_2(t)+\beta_2 e^{(r^{Y}-r^{\textup{\EUR{}}})t}
\end{align*}
with
\begin{align*}
dX_1(t)=(r^{\textup{\EUR{}}}-r^{\$})X_1(t)dt+\sigma_1^{I_1}(t) X_1(t)dW_t^{1,\textup{\EUR{}}}\\
dX_2(t)=(r^{Y}-r^{\textup{\EUR{}}})X_2(t)dt+\sigma_2^{I_2}(t) X_2(t)dW_t^{2,Y}
\end{align*}

where $r^{\textup{\EUR{}}}$, $r^{\$}$, $r^{Y}$ are the euro, dollar and yen interest rates, respectively, and $\sigma_1^{I_1}(t)$, $\sigma_2^{I_2}(t)$ are as in equation (\ref{edsuncertain_volatily_model}).
$W_t^{1,\textup{\EUR{}}}$  and $W_t^{2,Y}$ indicate that we are considering the dynamics of $S_1$ and of $S_2$, each under its own domestic measure, that is the euro in the case of $S_1$ and the yen in the case of $S_2$.

We calibrate $S_1$ and $S_2$ independently, each to its own volatility curve, using 2 components and minimizing the squared percentage difference between model and market implied volatilities. Then, we look at the product $S_1 S_2$, representing the cross exchange USD/JPY, and we check whether the model is able to reproduce the cross smile consistently with the smiles of the single assets. In particular, we find $\rho$ that minimizes the squared percentage difference between implied volatilities from options on the basket $S_3=S_1 S_2$ under the shifted MVMD model (the shifted SCMD model) and market implied volatilities. In other words, we look at the implied correlations under the shifted MVMD model and the shifted SCMD model.

When performing calibration on $S_3$, we express both the dynamics of $X_1$ and of $X_2$ under the yen:

\begin{align*}
dX_1(t)&=(r^{\textup{\EUR{}}}-r^{\$}-\rho \sigma_1^{I_1}(t) \sigma_2^{I_2}(t))X_1(t)dt+\sigma_1^{I_1} X_1(t)dW_t^{1,Y}\\
dX_2(t)&=(r^{Y}-r^{\textup{\EUR{}}})X_2(t)dt+\sigma_2^{I_2}(t) X_2(t)dW_t^{2,Y},
\end{align*}

and then we calculate prices of options on 
$$S_3(t)=(X_1(t)+\beta_1e^{(r^{\textup{\EUR{}}}-r^{\$})t})(X_2(t)+\beta_2 e^{(r^{Y}-r^{\textup{\EUR{}}})t}).$$

All the data for our numerical experiments are downloaded from a Bloomberg terminal. 
We start by considering data from 19th February 2015. The initial values of $S_1,S_2$ are $S_1(0)=0.878$, $S_2(0)=135.44$. First we calibrate $S_1$ and $S_2$ using implied volatilities from options with maturity of $6$ months. Denoting 
\begin{align*}
\eta_1=\left(\sqrt{\frac{\int_0^T \sigma_1^1(s)^2 ds}{T}}, \sqrt{\frac{\int_0^T \sigma_1^2(s)^2 ds}{T}}\right) \\
\eta_2=\left( \sqrt{\frac{\int_0^T \sigma_2^1(s)^2 ds}{T}}, \sqrt{\frac{\int_0^T \sigma_2^2(s)^2 ds}{T}}\right)
\end{align*}
the $T$-term volatilities of the instrumental processes of $S_1$ and $S_2$, respectively, 
\begin{align*}
\lambda_1=(\lambda_1^1,\lambda_1^2),\\
\lambda_2=(\lambda_2^1,\lambda_2^2)
\end{align*}
the vector of probabilities of each component and $\beta_1$, $\beta_2$ the shift parameters, we obtain
$$\eta_1=(0.1952,0.0709),\ \lambda_1=(0.1402,0.8598),\ \beta_1=0.00068$$
for the asset $S_1$ and
$$\eta_2=(0.1184,0.0962),\ \lambda_2=(0.2735,0.7265),\ \beta_2=0.9752$$
for the asset $S_2$. Then, we perform a calibration on the cross product $S_3=$USD/JPY using volatilities from call options with maturity of $6$ months, finding the values: 
$$\rho_{MVMD}(6M)=-0.6015$$
for the shifted MVMD model and 
$$\rho_{SCMD}(6M)=-0.5472$$
for the shifted SCMD model. The higher value (in absolute terms) of the correlation parameter in the shifted MVMD model is due to higher state dependence in the diffusion matrix with respect to the shifted SCMD model. This is partly related to Proposition \ref{Prop:LocalCorrelation}. In other words, in order to achieve the same local correlation as in the shifted SCMD model, the shifted MVMD model needs a higher absolute value of $\rho$.

The corresponding prices and implied volatilities are plotted in Figure \ref{Fig:Basket6M-19Feb} whereas Table \ref{Table:Basket6M-19Feb} reports the absolute differences between market and model values corresponding to a few strikes. The reported plot shows that the shifted MVMD model is better at reproducing market prices than the shifted SCMD model. What is very remarkable in this example is that the shifted MVMD fits the whole correlation skew with just one value of $\rho$.
%almost hinting at the fact that we are finding something close to the true dynamics of dependence as implied by cross option prices and explained by a single dependence parameter $\rho$.  

As a second numerical experiment we repeat the calibration using prices with maturity of $9$ months. Specifically, we first calibrate $S_1$ and $S_2$ obtaining the values 
$$\eta_1=(0.2236,0.0761),\ \lambda_1=(0.0262,0.9738),\ \beta_1=0.0100$$
for the asset $S_1$ and
$$\eta_2=(0.1244,0.0497),\ \lambda_2=(0.7584,0.2416),\ \beta_2=0.7856$$
for the asset $S_2$. For $S_2$, we observe that the higher volatility now has the highest probability as opposed to the results found for 6 months options. Then, we perform a calibration on the cross product $S_3=$USD/JPY using volatilities from call options with maturity of $9$ months, finding the values: 
$$\rho_{MVMD}(9M)=-0.6199$$
for the shifted MVMD model and 
$$\rho_{SCMD}(9M)=-0.5288$$
for the shifted SCMD model. These values are comparable with those found for 6 months options. This shows that the model is quite consistent. 

The corresponding prices and implied volatilities are shown in Figure \ref{Fig:Basket9M-19Feb} whereas Table \ref{Table:Basket9M-19Feb} reports some absolute differences between model and market values.
Overall, also in this case the shifted MVMD model outperforms the shifted SCMD in terms of ability to reproduce market prices on the cross product.

\begin{figure}[h!]
\begin{minipage}[b]{\linewidth}
\centering
\includegraphics[scale=0.34]{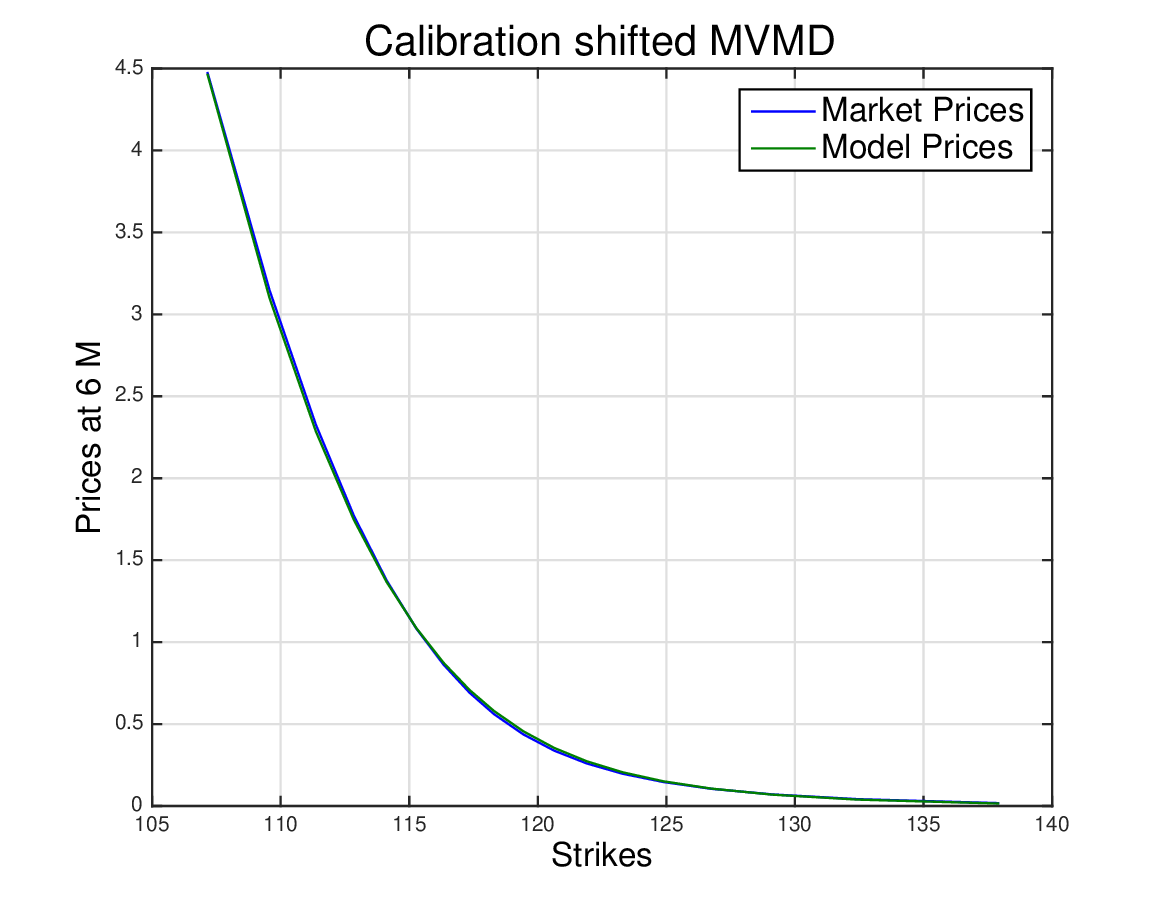}%[scale=0.5]
\centering
\includegraphics[scale=0.34]{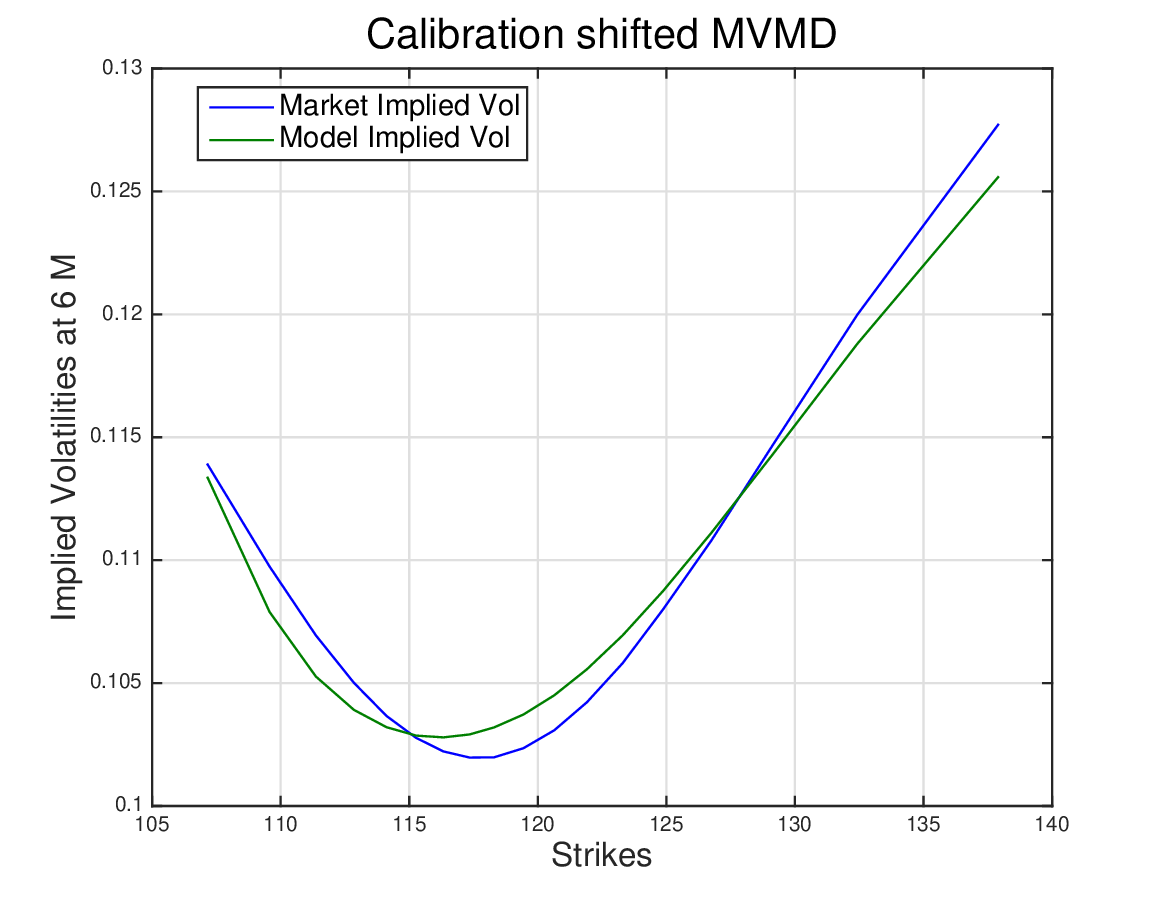}%[scale=0.5]
\end{minipage}
\begin{minipage}[b]{\linewidth}
\centering
\includegraphics[scale=0.34]{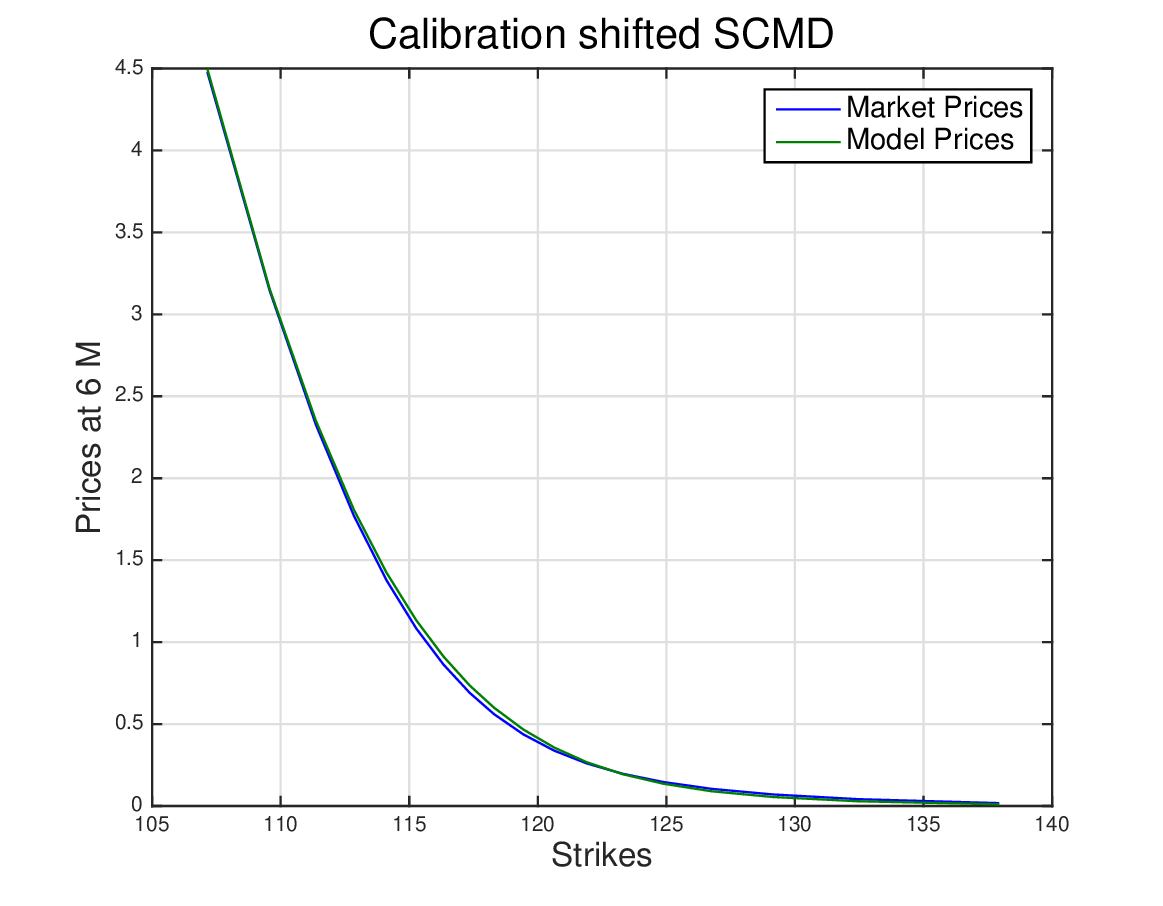}%[scale=0.5]
\centering
\includegraphics[scale=0.34]{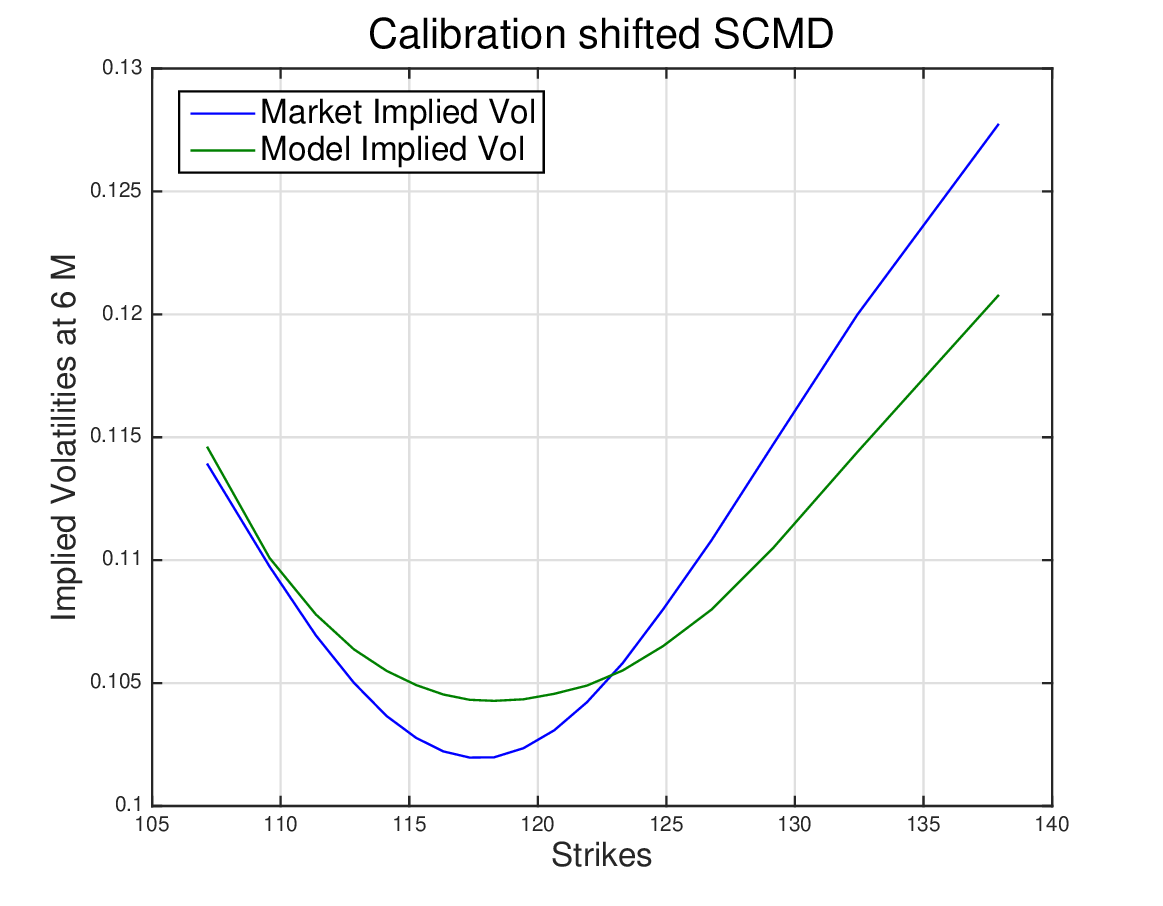}%[scale=0.5]
\end{minipage}
\caption{\textit{Calibration on $6$ months options, relative to 19 February 2015. The implied correlation is $\rho=-0.6015$ for the shifted MVMD model (top) and $\rho=-0.5472$ for the shifted SCMD model (bottom). }}
\label{Fig:Basket6M-19Feb}
\end{figure}
\begin{table}[htb]
\begin{center}
\begin{tabular}{|c|c|c|}
\hline
\multicolumn {3}{|c|}{$T = 6$ Months}\\
\hline
K & Shifted  MVMD  & Shifted SCMD \\
\hline
107.16 & 0.0107 & 0.0188 \\ 
114.12 & 0.0095 &   0.0463  \\
118.3 & 0.019 &  0.0394 \\ 
124.88 & 0.0045 & 0.0104 \\
137.9 &0.0026 & 0.0073  \\
\hline 
\end{tabular}\\
\vspace{0.5cm}
\begin{tabular}{|c|c|c|}
\hline
\multicolumn {3}{|c|}{$T = 6$ Months}\\
\hline
K & Shifted MVMD  & Shifted SCMD \\
\hline
107.16 & 0.0004 & 0.0007\\ 
114.12 & 0.0004    & 0.0018 \\
118.3 & 0.0011  & 0.0023\\ 
124.88 & 0.0006 & 0.0015\\
137.9 & 0.0022 & 0.007 \\
\hline 
\end{tabular}
\end{center}
\caption{\textit{Calibration on $6$ months options, relative to 19 February 2015. The tables report absolute differences between market and model prices (top) and absolute differences between market and model implied volatilities (bottom).}}
\label{Table:Basket6M-19Feb}
\end{table}

\begin{figure}[h!]
\begin{minipage}[b]{\linewidth}
\centering
\includegraphics[scale=0.34]{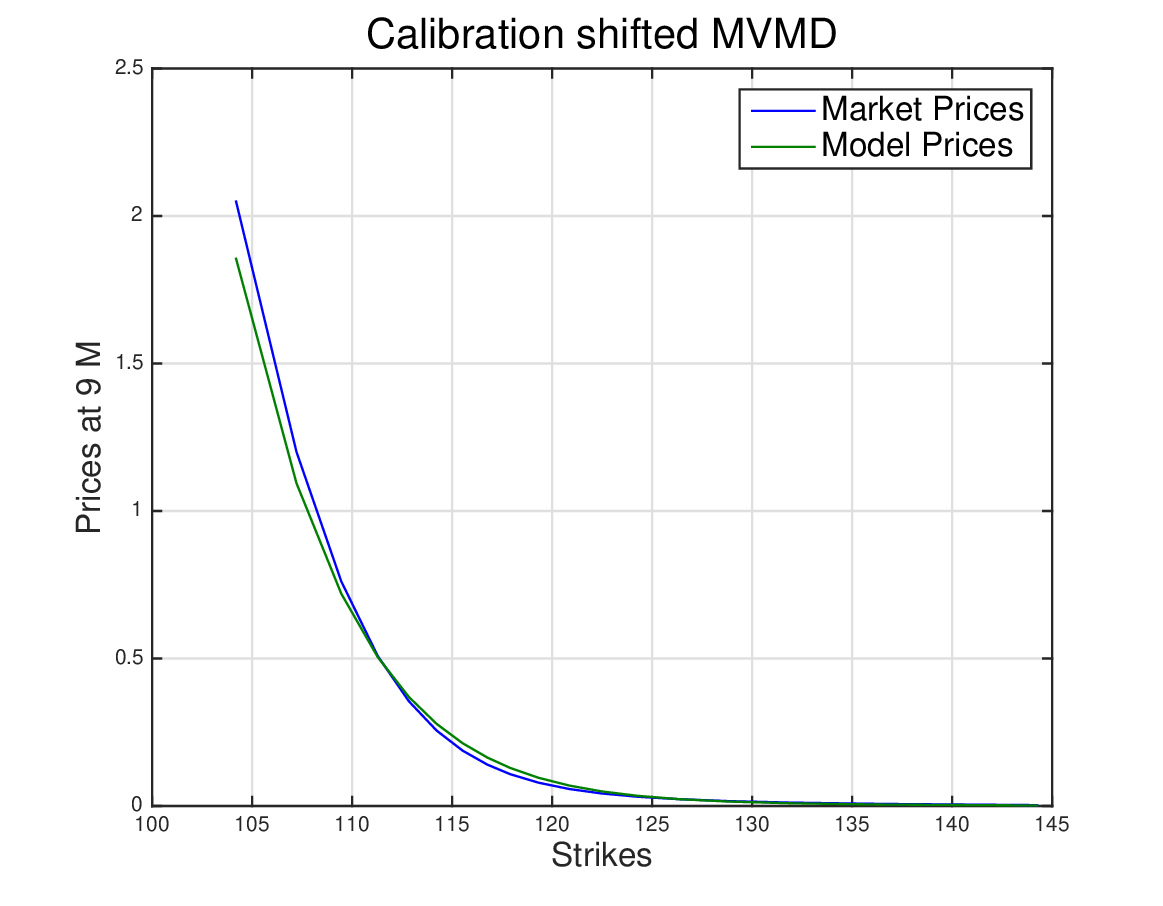}%[scale=0.5]
\centering
\includegraphics[scale=0.34]{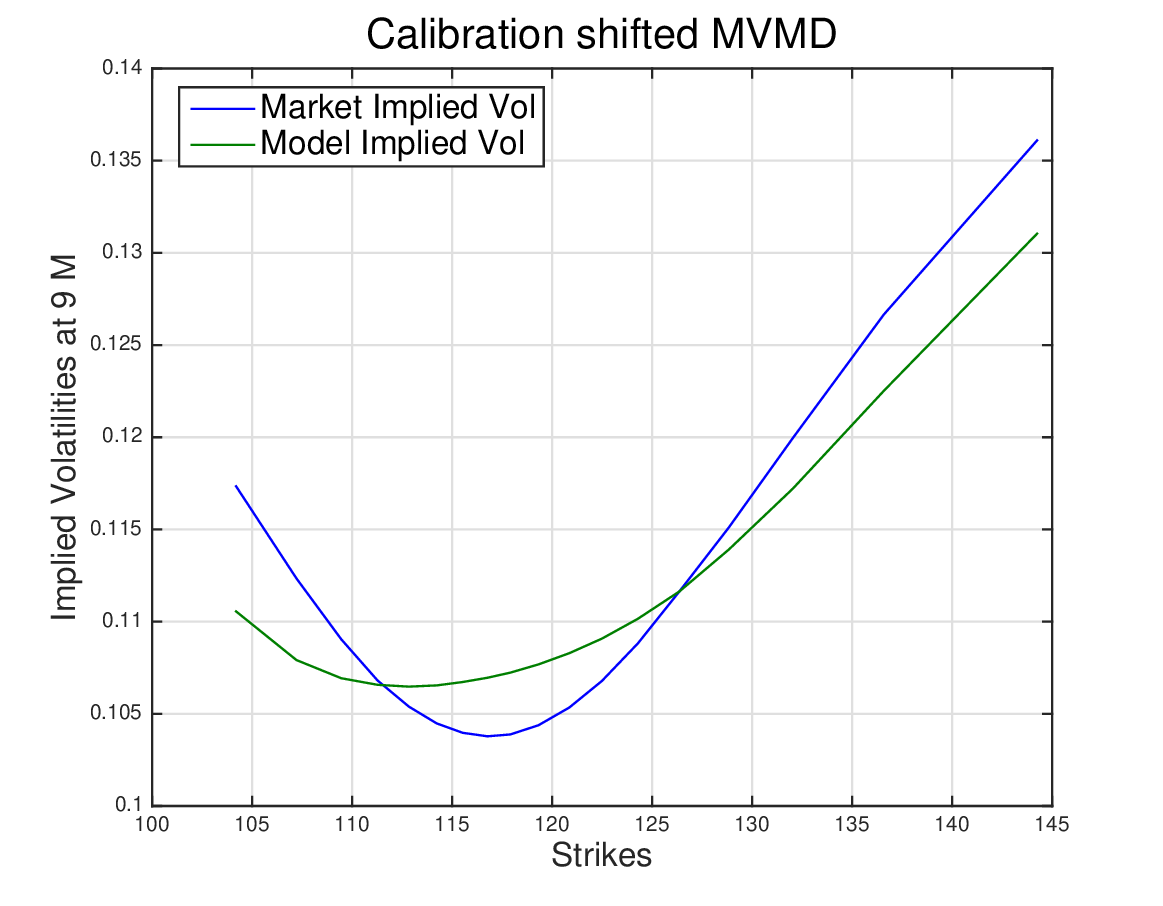}%[scale=0.5]
\end{minipage}
\begin{minipage}[b]{\linewidth}
\centering
\includegraphics[scale=0.34]{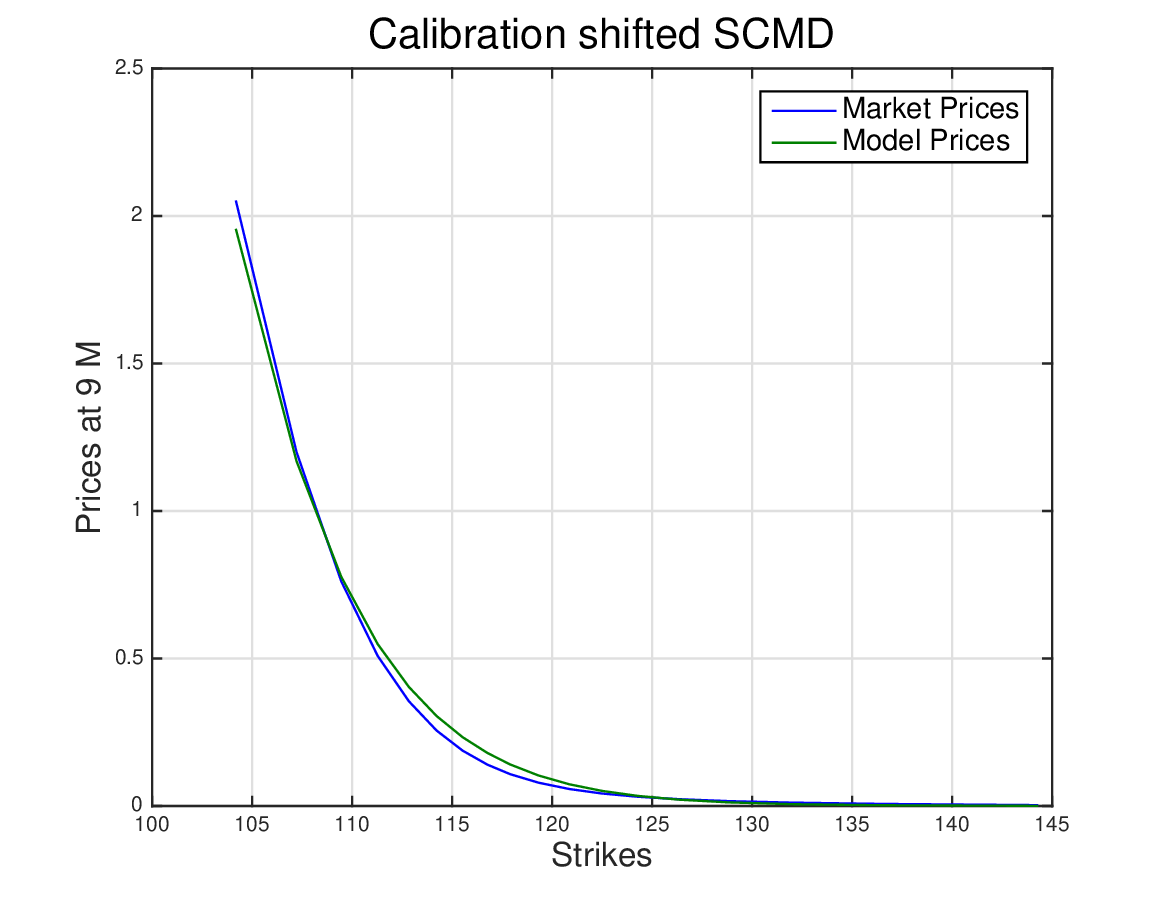}%[scale=0.5]
\centering
\includegraphics[scale=0.34]{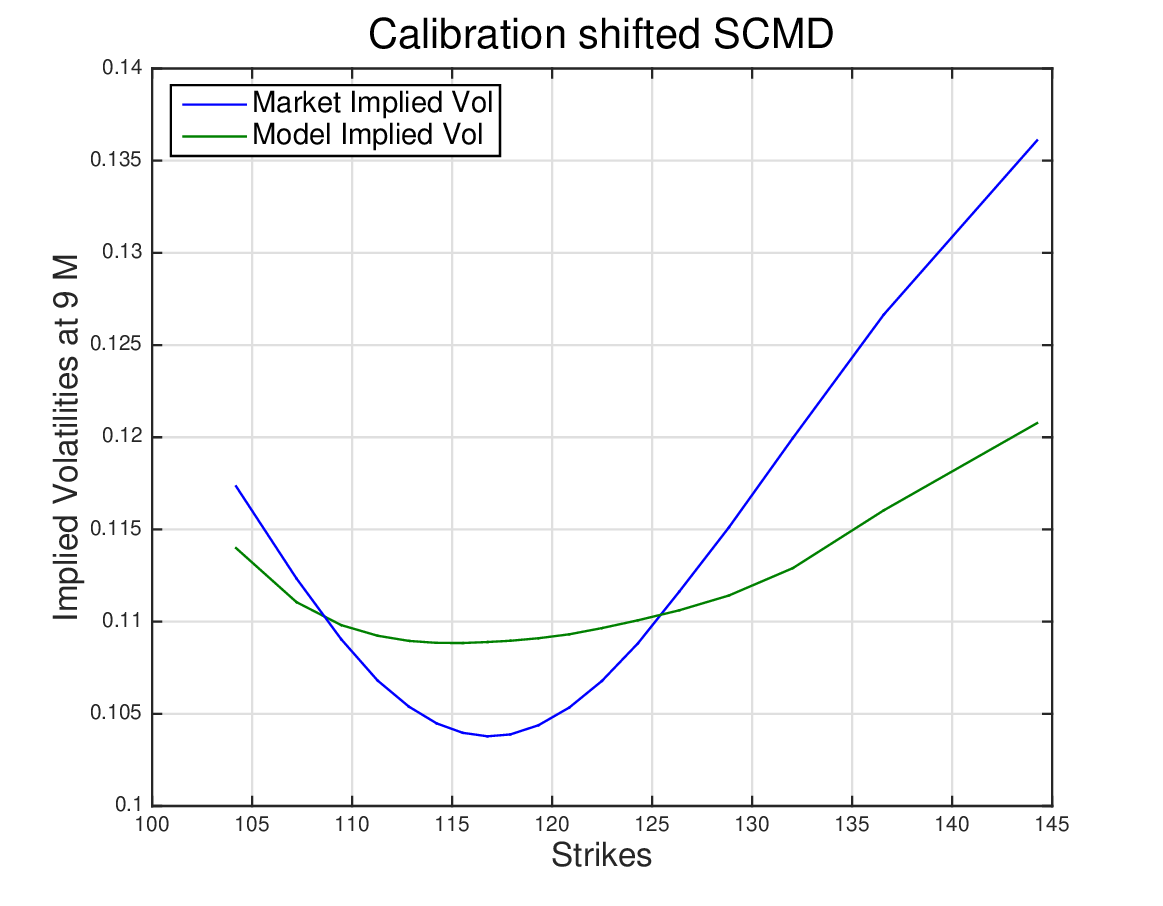}%[scale=0.5]
\end{minipage}
\caption{\textit{Calibration on $9$ months options, relative to 19 February 2015. The implied correlation is $\rho=-0.6199$ for the shifted MVMD model (top) and $\rho=-0.5288$ for the shifted SCMD model (bottom), which are comparable with the values obtained using $6$ months options.}}
\label{Fig:Basket9M-19Feb}
\end{figure}
\begin{table}[htb]
\begin{center}
\begin{tabular}{|c|c|c|}
\hline
\multicolumn {3}{|c|}{$T = 9$ Months}\\
\hline
K & Shifted MVMD  & Shifted SCMD \\
\hline
104.2 & 0.2098 & 0.0961 \\ 
112.84 & 0.0116 &   0.0485  \\
117.91 & 0.0223 &  0.0329 \\ 
122.49 & 0.0088 & 0.0089 \\
144.25 & 0.0013 & 0.0025  \\
\hline 
\end{tabular}\\
\vspace{0.5cm}
\begin{tabular}{|c|c|c|}
\hline
\multicolumn {3}{|c|}{$T = 9$ Months}\\
\hline
K &  Shifted MVMD  & Shifted SCMD \\
\hline
104.2 & 0.0073 & 0.0034 \\ 
112.84 & 0.0008 &   0.0024  \\
117.91 & 0.0035 &  0.005 \\ 
122.49 & 0.0028 & 0.0029 \\
144.25 & 0.005 & 0.01  \\
\hline 
\end{tabular}
\end{center}
\caption{\textit{Calibration on $9$ months options, relative to 19 February 2015. The tables report absolute differences between market and model prices (top) and absolute differences between market and model implied volatilities (bottom).}}
\label{Table:Basket9M-19Feb}
\end{table}

\section{Introducing random correlations in the mixture dynamics}\label{Comparison-RandomCorrelation}
A single correlation parameter $\rho$ may not be enough to fit prices on the cross asset. To overcome this, we can allow for random correlations between the single assets in the shifted MUVM model (\ref{eq:ShiftedMUVM}). Specifically,

\begin{equation}\label{eq:ShiftedMUVMRandomCorrelations}
\begin{array}{l l}
d S_1(t) = \mu_1\ S_1(t) dt + \sigma^{I_1}_1(t)\ (S_1(t)-\beta_1 e^{\mu_1 t}) dW^{I_1}_1(t)\\
d S_2(t) = \mu_2\ S_2(t) dt + \sigma^{I_2}_2(t)\ (S_2(t)-\beta_2 e^{\mu_2 t}) dW^{I_2}_2(t)
\end{array}
\end{equation}
where the Brownian motions $W_1^{I_1}$, $W_2^{I_2}$ now have correlation $\rho^{I_1,I_2}$. The correlation parameter will therefore assume the value $\rho^{h,k}$ in correspondence with a couple $(\sigma^{h}_1, \sigma^{k}_2)$ with probability $\lambda_h \lambda_k$. 

\begin{theorem}
The shifted MUVM model with uncertain correlation parameter has, as Markovian projection, a shifted MVMD model solution of the SDE (\ref{edsShiftedMVMD}) but with equation (\ref{V}) transformed into 
\begin{equation}\label{V-random}
{V}^{k_1,...,k_n}(t) = 
%(\sigma_i^{k_i}(t) B_i) \  (\sigma_j^{k_j}(t) B_j)^T = 
 \left[\sigma_i^{k_i}(t)\ \rho^{k_i,k_j}_{i,j}\
\sigma_j^{k_j}(t)\right]_{i,j = 1,...,n}.
\end{equation}
\end{theorem}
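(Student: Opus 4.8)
The plan is to follow verbatim the structure of the proof of Theorem~\ref{Theo:MarkovianProjectionShiftedModels}, replacing the common correlation $\rho_{ij}$ by the scenario-dependent one $\rho_{ij}^{I_i,I_j}$ and tracking where this substitution propagates. First I would apply It\^{o}'s lemma to the shifted components $\tilde\xi_i(t)=\xi_i(t)+\beta_i e^{\mu_i t}$ of the model (\ref{eq:ShiftedMUVMRandomCorrelations}), extended to $n$ dimensions, to obtain an SDE of the form (\ref{edsscenariomodel-shift}),
\[
d\underline{\tilde\xi}(t)=diag(\underline{\mu})\,\underline{\tilde\xi}(t)\,dt+diag(\underline{\tilde\xi}(t)-\underline{\alpha}(t))\,A^I(t)\,d\underline{W}(t),
\]
where now $A^I(t)$ is the Cholesky factor of the \emph{scenario} covariance matrix $\Sigma^I_{ij}(t):=\sigma_i^{I_i}(t)\,\sigma_j^{I_j}(t)\,\rho_{ij}^{I_i,I_j}$. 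Setting $\tilde v(t,\underline{\tilde\xi}(t))=diag(\underline{\tilde\xi}(t)-\underline{\alpha}(t))A^I(t)$, the goal is again to verify Gy\"{o}ngy's identity (\ref{eq:Gyongy}), i.e. that $\mathbb{E}[\tilde v\tilde v^{T}\mid\underline{\tilde\xi}(t)=\underline{\tilde x}]$ equals the shifted MVMD diffusion matrix $\tilde a(t,\underline{x})$.

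The core step is the conditional-expectation computation. I would condition on the discrete scenario $(I_1,\ldots,I_n)=(k_1,\ldots,k_n)$, which occurs with probability $\lambda_1^{k_1}\cdots\lambda_n^{k_n}$, exploiting the mutual independence of $I_1,\ldots,I_n$ and their independence of the driving Brownian motions. The essential observation is that conditioning on $\{I=k\}$ now fixes \emph{simultaneously} the volatilities $\sigma_i^{k_i}$ and the correlations $\rho_{ij}^{k_i,k_j}$: within this scenario $\underline{\tilde\xi}$ is a shifted multivariate geometric Brownian motion whose law is the density $\tilde\ell_{1,\ldots,n;t}^{k_1,\ldots,k_n}$ of (\ref{ell tilde}) but with its integrated covariance (\ref{covarianceMatrix}) computed using $\rho_{ij}^{k_i,k_j}$, and for which $\tilde v\tilde v^{T}=diag(\underline{\tilde\xi}(t)-\underline{\alpha}(t))\,V^{k_1,\ldots,k_n}(t)\,diag(\underline{\tilde\xi}(t)-\underline{\alpha}(t))$ with $V$ already in the modified form (\ref{V-random}). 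Applying the law of total expectation and the same matrix manipulations as in Theorem~\ref{Theo:MarkovianProjectionShiftedModels}, the factor $diag(\underline{x}-\underline{\alpha}(t))$ pulls out on both sides and the scenario-weighted average reproduces precisely formula (\ref{CShiftedMVMD}) with $V$ replaced by (\ref{V-random}), which is the claimed diffusion matrix.

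I expect the genuinely delicate point to lie not in this routine computation but in two bookkeeping issues that must be made explicit. First, although the statement only advertises the change in $V$, the per-scenario mixing densities $\tilde\ell_{1,\ldots,n;t}^{k_1,\ldots,k_n}$ appearing in (\ref{edsShiftedMVMD}) and (\ref{CShiftedMVMD}) must themselves be read with the scenario correlation $\rho_{ij}^{k_i,k_j}$ inside $\Xi^{(k_1\cdots k_n)}$; the correlation enters both the scenario weights and the Gaussian kernels, and consistency between the two is what makes Gy\"{o}ngy's identity close. Second, to guarantee that the projected coefficient is a bona fide diffusion matrix --- and, beyond weak matching of marginals, that the resulting shifted MVMD admits a strong solution as in the MVMD existence theorem --- one needs each scenario correlation matrix $[\rho_{ij}^{k_i,k_j}]_{i,j}$ to be positive definite. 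Under this hypothesis the diffusion matrix is a state-dependent convex combination (with strictly positive weights summing to one) of the positive-definite per-scenario covariances $diag(\underline{x}-\underline{\alpha}(t))\,V^{k_1,\ldots,k_n}(t)\,diag(\underline{x}-\underline{\alpha}(t))$ divided entrywise by $x_ix_j$, hence itself positive definite; this is the place where the single-correlation positivity assumption of the earlier theorems has to be strengthened to a per-scenario requirement.
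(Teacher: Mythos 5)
Your proposal is correct and follows essentially the same route as the paper, whose proof consists of a single sentence invoking Gy\"{o}ngy's lemma ``in a similar way as in the proof of Theorem~\ref{Theo:MarkovianProjectionShiftedModels}''. Your fleshed-out version --- conditioning on the scenario $(I_1,\ldots,I_n)=(k_1,\ldots,k_n)$, noting that the scenario fixes both $\sigma_i^{k_i}$ and $\rho_{ij}^{k_i,k_j}$, and observing that the scenario correlations must also enter the integrated covariances of the mixing densities $\tilde\ell_{1,\ldots,n;t}^{k_1,\ldots,k_n}$ --- supplies exactly the details the paper leaves implicit.
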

\begin{proof}
The Markovian projection property can be easily shown by an application of Gy\"{o}ngy's lemma, in a similar was as in the proof of Theorem \ref{Theo:MarkovianProjectionShiftedModels}.
\end{proof}

In other words, the correlation between two generic instrumental processes $Y_i^k$, $Y_j^h$ will depend not only on the assets $S_i$, $S_j$, but will correspond to a specific choice of the instrumental processes $Y_i^k$, $Y_j^h$ themselves.

\subsection{Cross FX rates study for shifted MVMD with random correlations} 
As a numerical illustration we performed on the shifted MVMD model the same experiment as in Section \ref{sec:Comparison}. We used 6 months options from 7th September 2015. The initial values of the single FX rates are $S_1(0)=0.8950$, $S_2(0)=133.345$. 
In this case the calibration of the shifted SCMD model is much worse, to the point that there is no value of $\rho$ that can fit any of the prices obtained through this model. On the other hand, in the case of the shifted MVMD model, in particular when introducing random correlations, the fit leads to quite good results. 
As in the previous cases we independently calibrate $S_1$=USD/EUR and $S_2$=EUR/JPY on the corresponding implied volatilities obtaining
$$\eta_1=(0.1803,0.0916),\ \lambda_1=(0.0274,0.9726),\ \beta_1=0.0128$$
$$\eta_2=(0.1230,0.0501),\ \lambda_2=(0.6575,0.3425),\ \beta_2=0.1867$$

and then we look at the cross exchange rate $S_3=S_1 S_2$=USD/JPY. When performing calibration using a shifted MVMD model with one correlation parameter only, we obtain  
$$\rho=-0.6147,$$
whereas when using random correlations, we have
$$\rho^{1,1}=-0.8717,\ \rho^{1,2}=-0.1762,\ \rho^{2,1}=-0.6591,\ \rho^{2,2}=	-0.2269. $$

The corresponding plots are shown in Figure \ref{Fig:BasketRandomCorrelation6M-Sept2015}, in connection with Table \ref{Table:BasketRandomCorrelation6M-Sept2015}. In this case we also see that using random correlations improves the fit with respect to the case with a single correlation parameter. Moreover, computing the expectation and the standard deviation for the random correlation under the risk-neutral measure $\bQ$, we obtain
$$\bE^{\bQ}( \rho^{i,j})=-0.5144$$
$$ \mbox{Std}^{\bQ}(\rho^{i,j})=0.2105$$
satisfying $\vert \bE^{Q}(\rho^{i,j}) - \rho \vert < \frac{ \mbox{Std}^Q(\rho^{i,j})}{2}$. In other words, the absolute difference between the $\bQ$-expected random correlation and the deterministic correlation is smaller than half the $\bQ$-standard deviation. This means that the random correlation is on average not that far from the deterministic value.

\begin{figure}[h!]
\begin{minipage}[b]{\linewidth}
\centering
\includegraphics[scale=0.34]{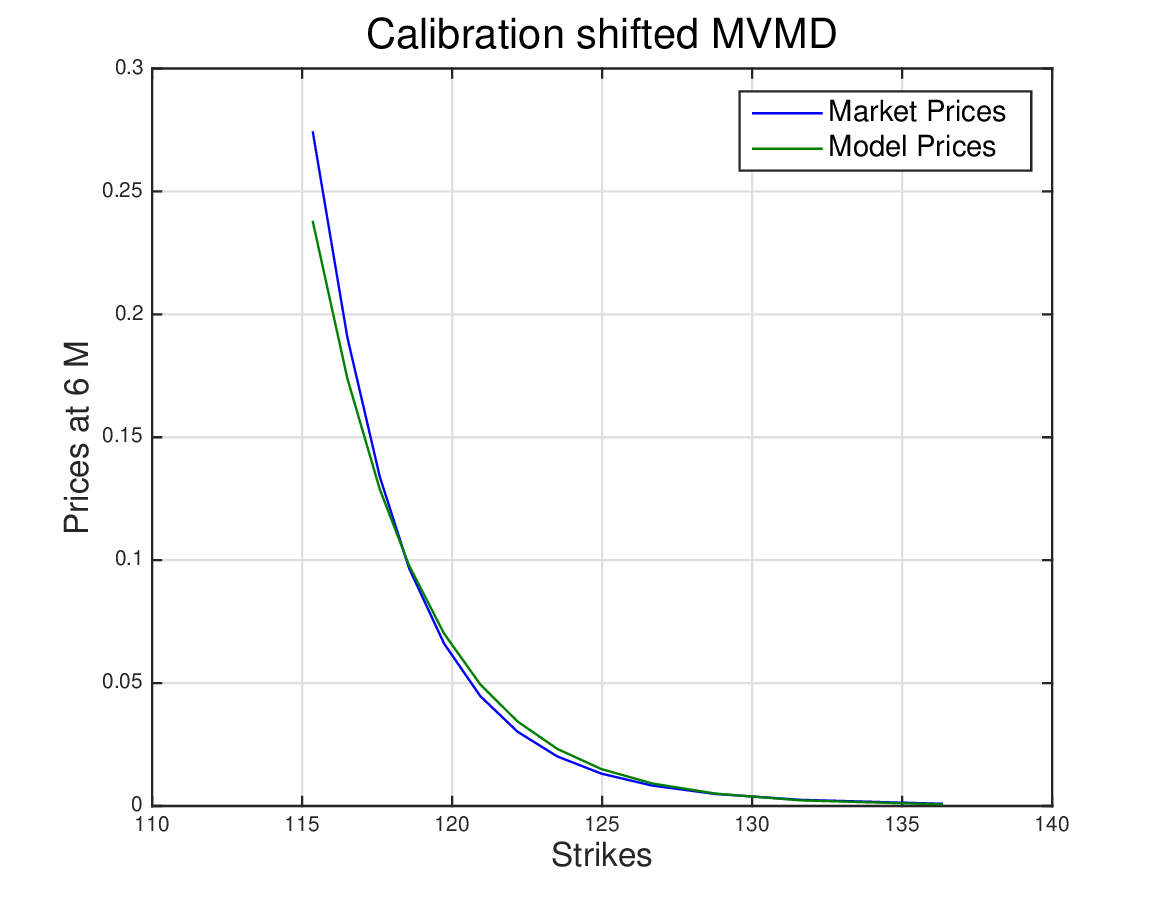}%[scale=0.4]
\centering
\includegraphics[scale=0.34]{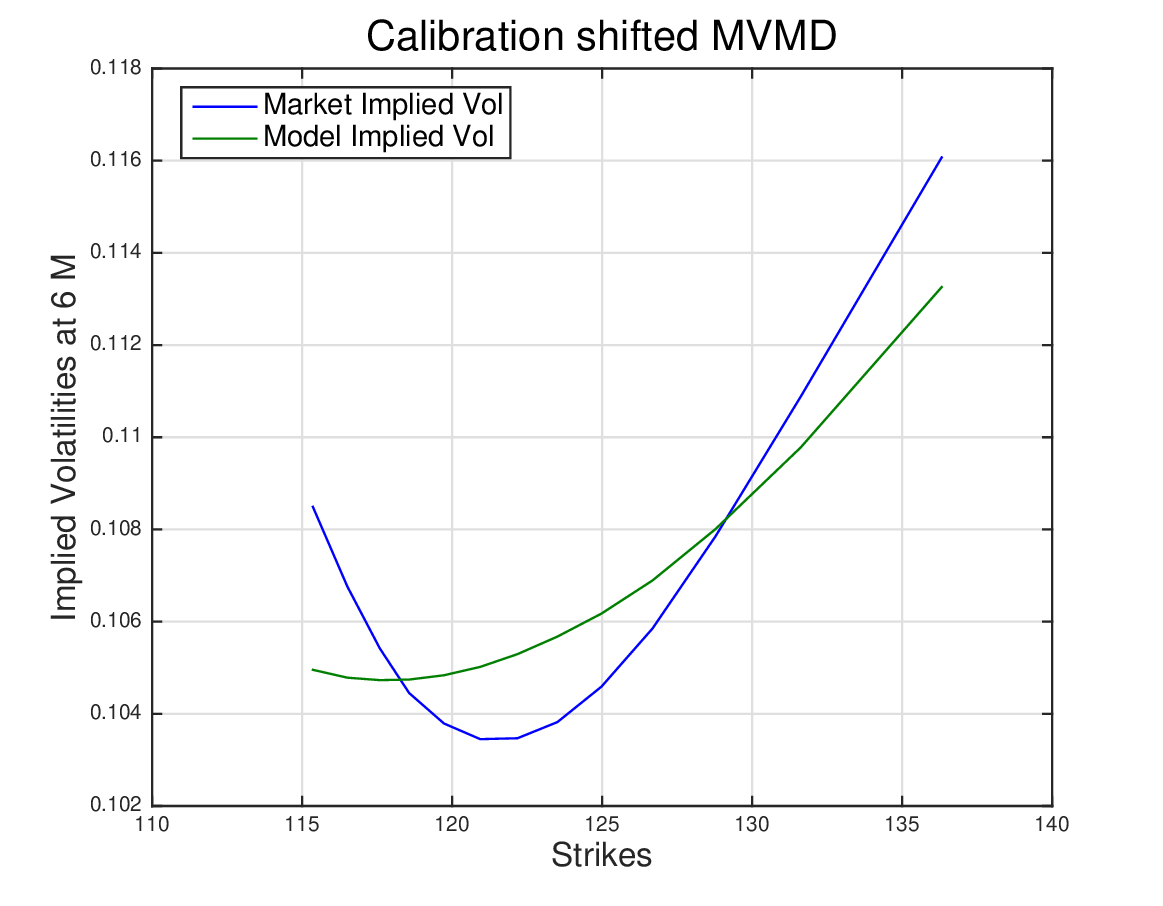}%[scale=0.4]
\end{minipage}
\begin{minipage}[b]{\linewidth}
\centering
\includegraphics[scale=0.34]{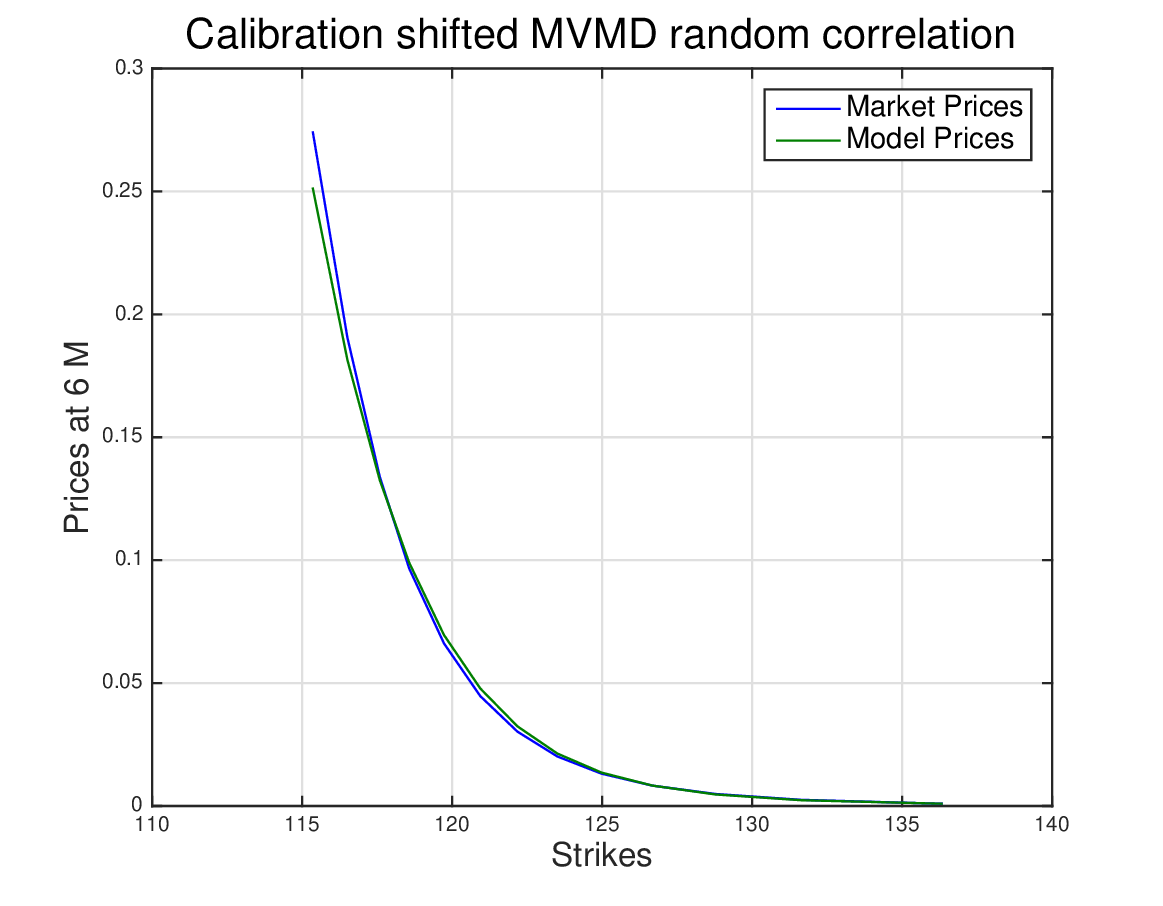}%[scale=0.4]
\centering
\includegraphics[scale=0.34]{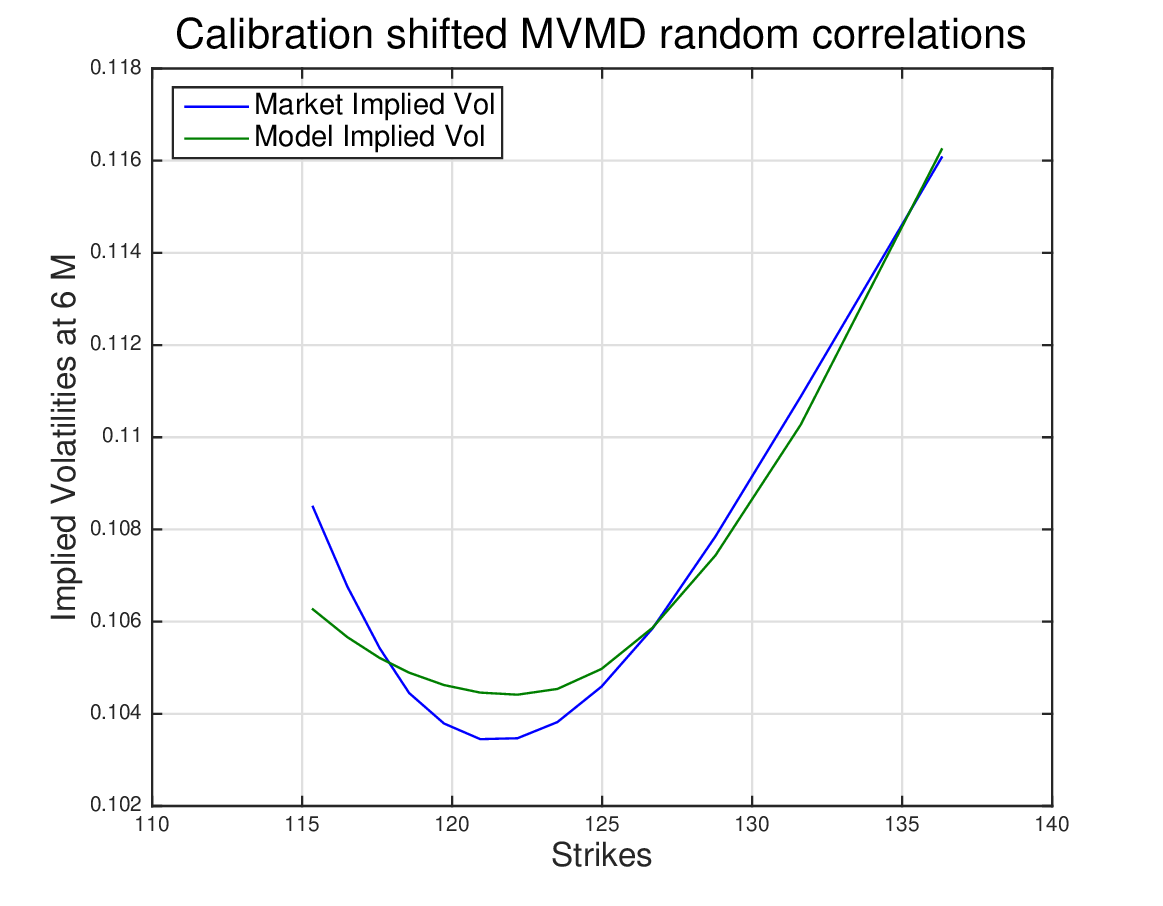}%[scale=0.4]
\end{minipage}
\caption{\textit{Calibration of the MVMD model on $6$ months options, relative to data from 7th September 2015. The calibration using one single correlation parameter is shown in the top part which corresponds to a fitted value equal to $\rho=-0.6147$. In the bottom, calibration using the MVMD model with random correlations is presented. The corresponding fitted correlations are $\rho^{1,1}=-0.8717,\ \rho^{1,2}=-0.1762,\ \rho^{2,1}=-0.6591,\ \rho^{2,2}=	-0.2269. $
}}
\label{Fig:BasketRandomCorrelation6M-Sept2015}
\end{figure}
\begin{table}[htb]
\begin{center}
\begin{tabular}{|c|c|c|}
\hline
\multicolumn {3}{|c|}{$T = 6$ Months}\\
\hline
K &  Shifted MVMD  & Shifted MVMDRC \\
\hline
115.36 & 0.0359 & 0.02228 \\ 
118.57 & 0.0016 &   0.0024  \\
122.18 & 0.0042 &  0.0022 \\ 
126.68 & 0.0008 & $1.18*10^{-5}$ \\
136.32 & 0.0003 & $1.94*10^{-5}$  \\
\hline 
\end{tabular}\\
\vspace{0.5cm}
\begin{tabular}{|c|c|c|}
\hline
\multicolumn {3}{|c|}{$T = 6$ Months}\\
\hline
K & Shifted MVMD  & Shifted MVMDRC \\
\hline
115.36 & 0.0036 & 0.0022 \\ 
118.57 & 0.0002 &   0.0004  \\
122.18 & 0.0018 &  0.0009 \\ 
126.68 & 0.0011 & $1.159*10^{-5}$ \\
136.32 & 0.0025 & 0.0002  \\
\hline 
\end{tabular}
\end{center}
\caption{\textit{Calibration on $6$ months options, relative to 7th September 2015. The tables report absolute differences between market and model prices (top) and absolute differences between market and model implied volatilities (bottom).}}
\label{Table:BasketRandomCorrelation6M-Sept2015}
\end{table}

Finally, we repeat the same experiment using options with maturity of 9 months. We find:
$$\eta_1=(0.2073,0.0936),\ \lambda_1=(0.0012,0.9988),\ \beta_1=0.0216$$
$$\eta_2=(0.1573,0.0689),\ \lambda_2=(0.3563,0.6437),\ \beta_2=0.1288.$$

When looking at the cross product $S_1 S_2$=USD/JPY, we obtain 
$$\rho=-0.7488$$
in the case of one single correlation parameter, and 
$$\rho^{1,1}=-0.8679,\ \rho^{1,2}=-0.2208,\ \rho^{2,1}=-0.8303,\ \rho^{2,2}=	-0.3270$$
in the case where random correlations are introduced. Corresponding plots and absolute differences between market and model prices/implied volatilities can be found in Figure \ref{Fig:BasketRandomCorrelation9M-Sept2015} and Table \ref{Table:BasketRandomCorrelation9M-Sept2015}, which show that the shifted MVMD model with random correlations outperforms the constant-deterministic correlation model in this case as well. 

The values of expected random correlation and standard deviation under the $\bQ$ measure are 
$$\bE^{\bQ}( \rho^{i,j})=-0.5063$$
$$\mbox{Std}^{\bQ}(\rho^{i,j})=0.2411.$$
With respect to the case of $6$ months options, we observe a movement of the $\bQ$-expected random correlation away from the constant correlation. Moreover, if we look at the terminal correlations, that is the correlation between $S_1(T)$ and $S_2(T)$, for $T=9$ months, we obtain 
$$\hat \rho(9M)=-0.6894$$
in case $\rho$ is deterministic and 
$$\hat \rho_{\mbox{random}}(9M)=-0.5596$$
in case $\rho$ is random. As a final observation, we remark that in case $\rho$ is constant, an application of Schwartz's inequality shows that the absolute value of the terminal correlation is always smaller than the absolute value of the instantaneous correlation, as verified by the results above. One may wonder whether the same inequality holds in case of random correlations, if we substitute the instantaneous value with the mean of the random correlations. In this case it is not possible to use Schwartz's inequality as we did before and, indeed, the results obtained show that the inequality does not hold, at least not for the example considered above.

\begin{figure}[h!]
\begin{minipage}[b]{\linewidth}
\centering
\includegraphics[scale=0.34]{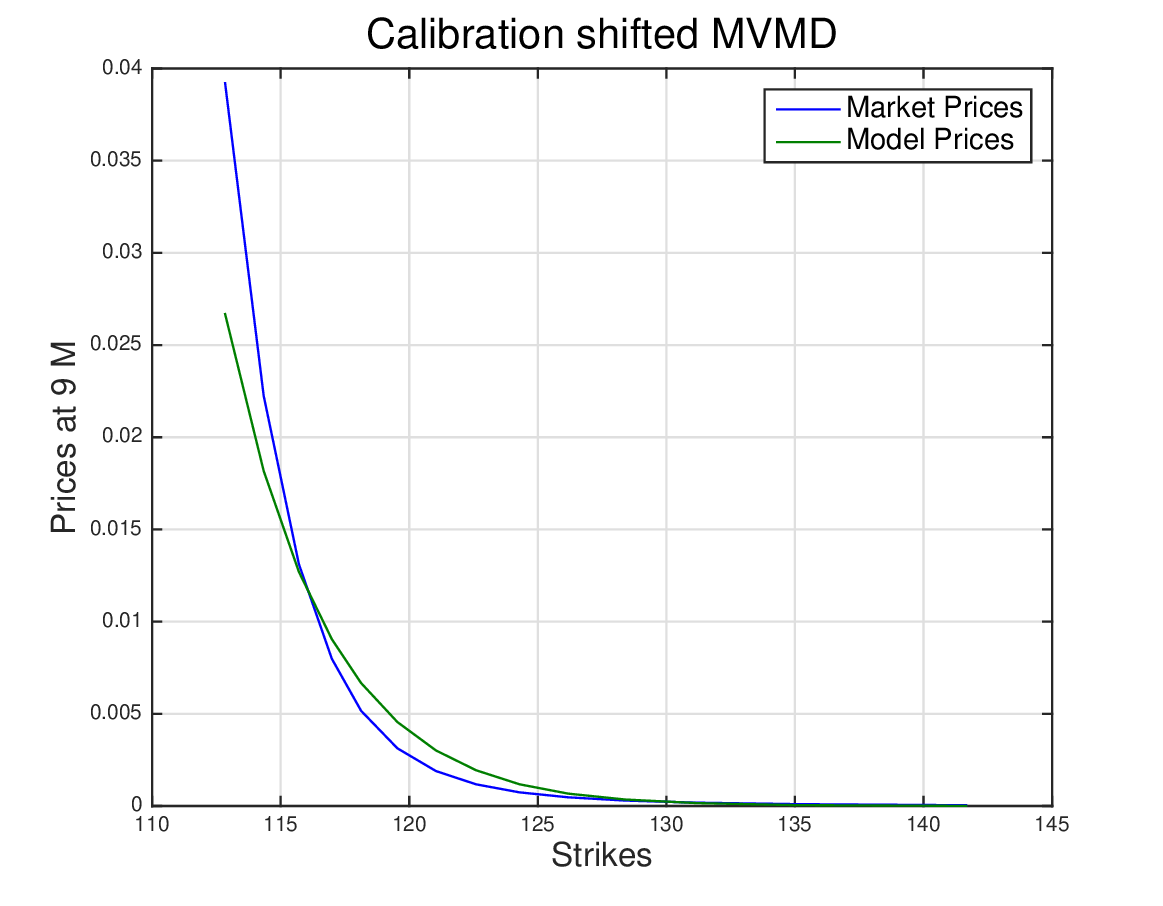}%[scale=0.4]
\centering
\includegraphics[scale=0.34]{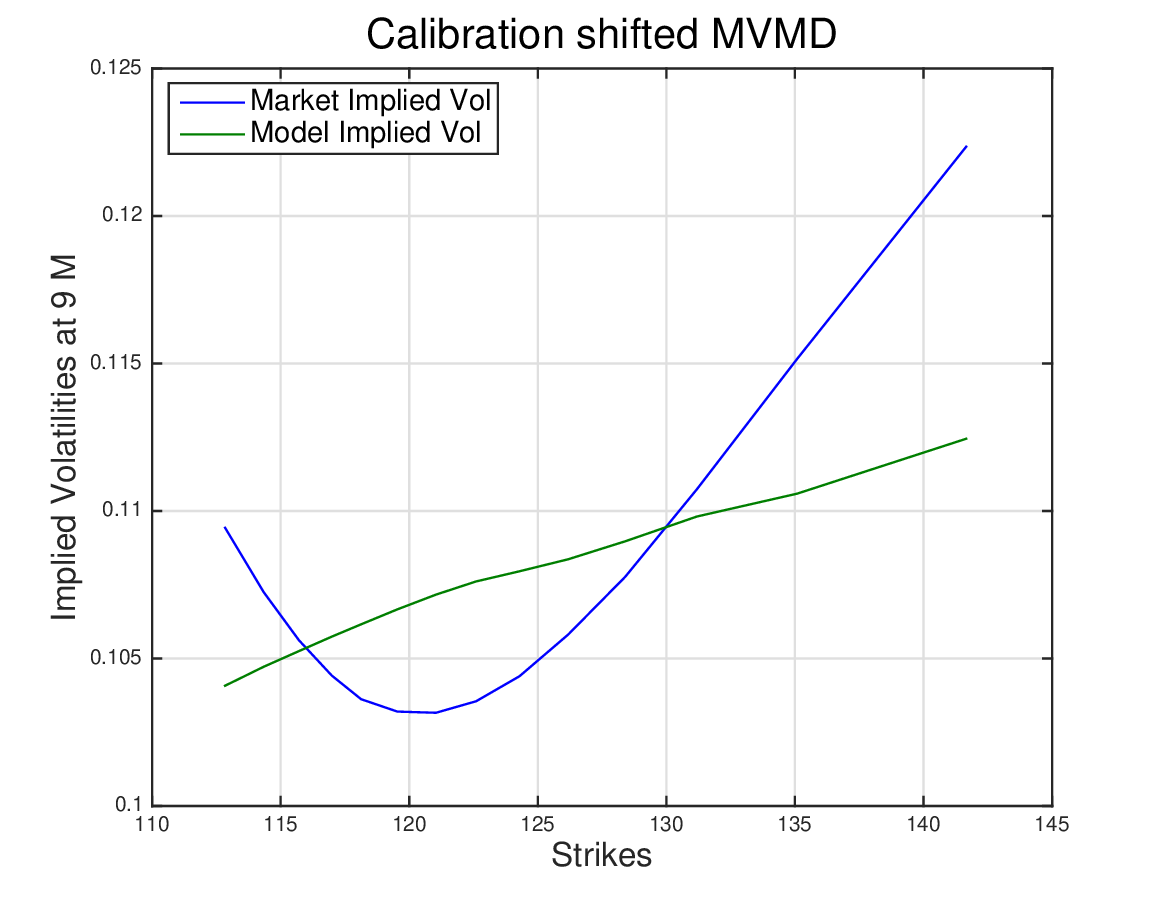}%[scale=0.4]
\end{minipage}
\begin{minipage}[b]{\linewidth}
\centering
\includegraphics[scale=0.34]{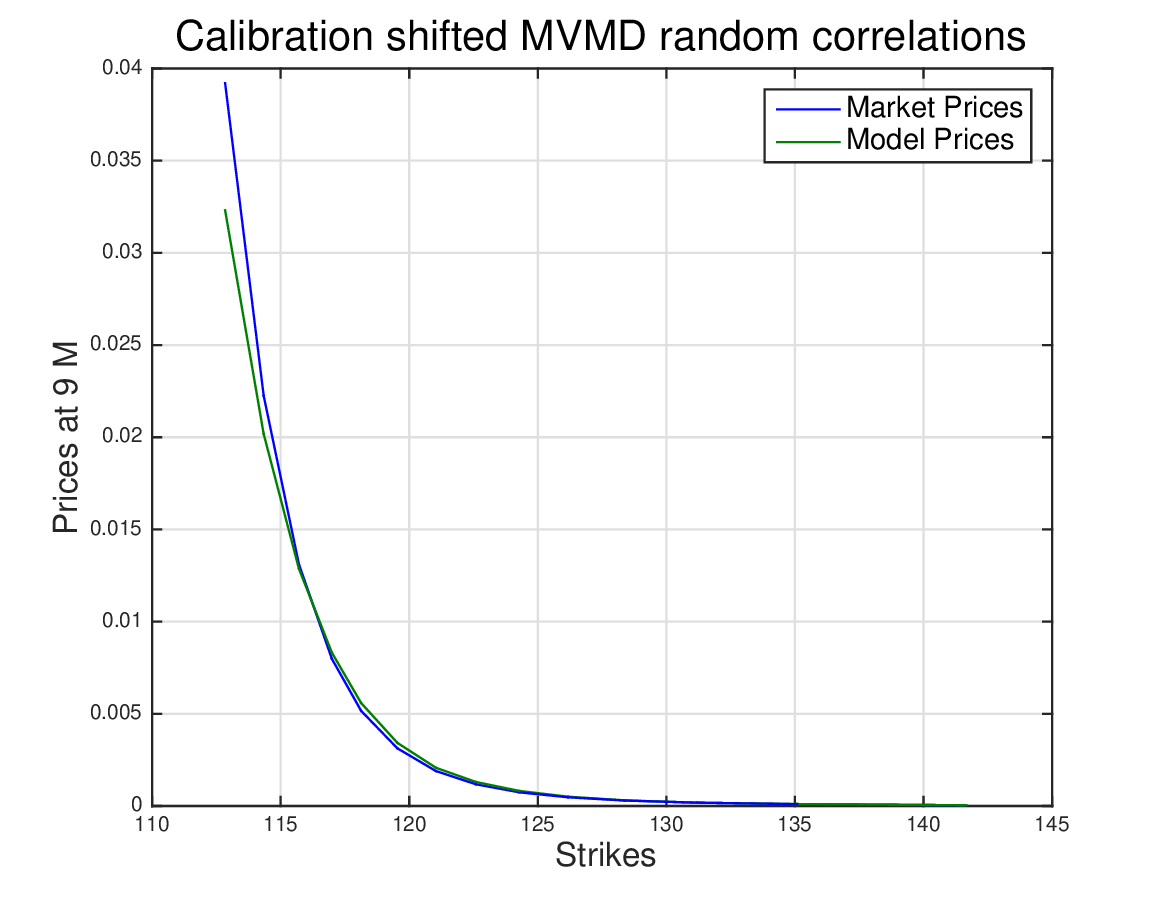}%[scale=0.4]
\centering
\includegraphics[scale=0.34]{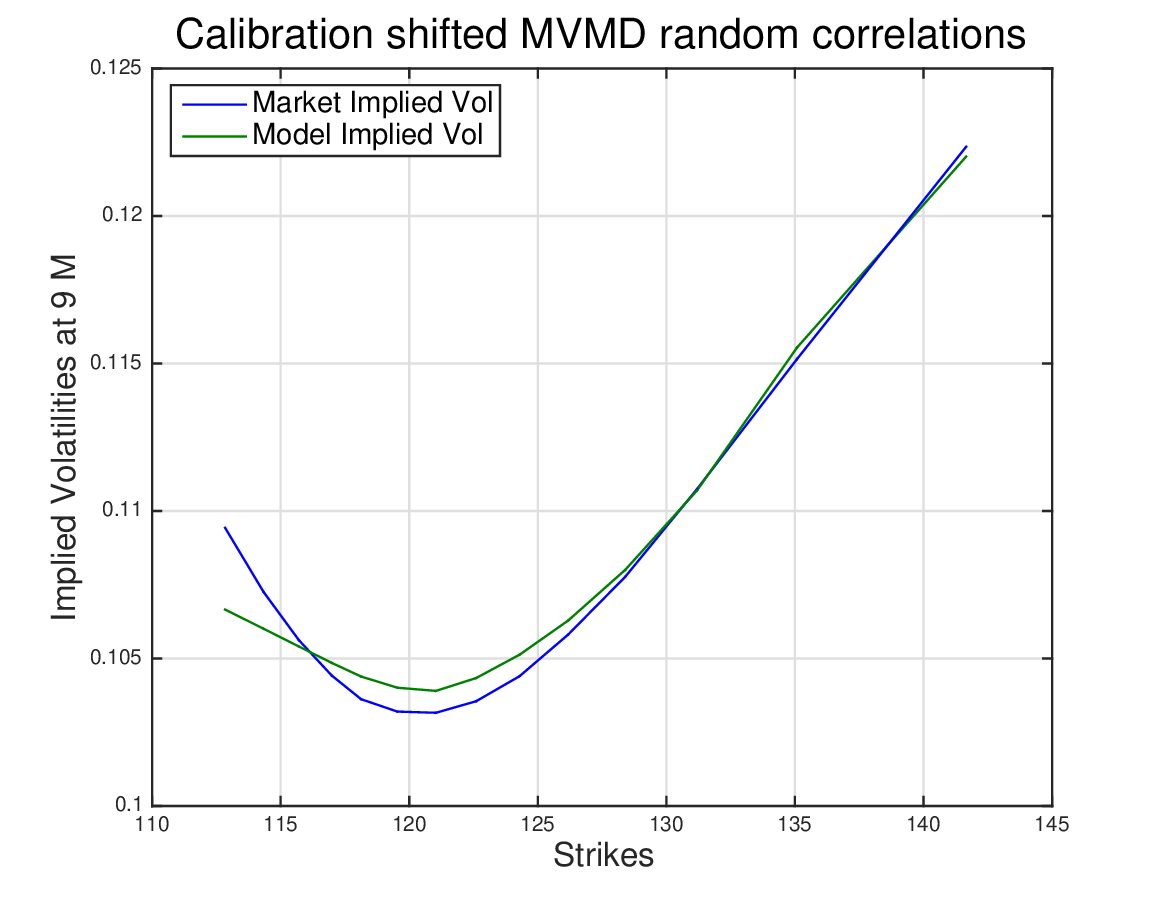}%[scale=0.4]
\end{minipage}
\caption{\textit{Calibration of the MVMD model on $9$ months options, relative to data from 7th September 2015. The calibration using one single correlation parameter is shown in the top part which corresponds to a fitted value equal to $\rho=-0.7488$. In the bottom, calibration using the MVMD model with random correlations is presented. The corresponding fitted correlations are $\rho^{1,1}=-0.8679,\ \rho^{1,2}=-0.2208,\ \rho^{2,1}=-0.8303,\ \rho^{2,2}=	-0.3270.$
}}
\label{Fig:BasketRandomCorrelation9M-Sept2015}
\end{figure}
\begin{table}[htb]
\begin{center}
\begin{tabular}{|c|c|c|}
\hline
\multicolumn {3}{|c|}{$T = 9$ Months}\\
\hline
K &  Shifted MVMD  & Shifted MVMDRC \\
\hline
112.84 & 0.0125 & 0.0069 \\ 
116.99 & 0.0011 &   0.0003  \\
121.04 & 0.0011&  0.0002 \\ 
126.18 & 0.0002 & $3.29*10^{-5}$ \\
141.66 & $3.46*10^{-5}$ & $2.2*10^{-5}$  \\
\hline 
\end{tabular}\\
\vspace{0.5cm}
\begin{tabular}{|c|c|c|}
\hline
\multicolumn {3}{|c|}{$T = 9$ Months}\\
\hline
K & Shifted MVMD  & Shifted MVMDRC \\
\hline
112.84 & 0.0054 & 0.0028 \\ 
116.99 & 0.0013 &   0.0004  \\
121.04 & 0.004 &  0.0007 \\ 
126.18 & 0.0026 & 0.0005 \\
141.66 & 0.01 & 0.0003  \\
\hline 
\end{tabular}
\end{center}
\caption{\textit{Calibration on $9$ months options, relative to 7th September 2015. The tables report absolute differences between market and model prices (top) and absolute differences between market and model implied volatilities (bottom).}}
\label{Table:BasketRandomCorrelation9M-Sept2015}
\end{table}

\section{Potential use for smile extrapolation: A Renminbi case study}

As a numerical illustration we performed on the shifted MVMD model analogous experiments as those in Section \ref{sec:Comparison} and Section \ref{Comparison-RandomCorrelation}, using 6 months options relative to data from 18th December 2015.  We first independently calibrate the exchange rates $S_1$=EUR/USD and $S_2$=USD/CNH on the corresponding implied volatilities, thus obtaining
$$\eta_1=(0.1132,0.0841),\ \lambda_1=(0.2209,0.7791),\ \beta_1=0.0063$$
$$\eta_2=(0.0447,0.1455),\ \lambda_2=(0.5956,0.4044),\ \beta_2=-0.0587$$
where the initial values of the single FX rates are $S_1(0)=1.0842$ and $S_2(0)=6.55$. The corresponding plots are shown in Figure \ref{Fig:ImpliedVol-RenminbiCase} (top).
 
In order to test how the model perform when looking at the FX cross rate $S_3=S_1 S_2=$EUR/CNH, we first calibrate the shifted  MVMD model using the atm call option only, thus obtaining 
$$\rho_{MVMD}(6M)=-0.1205.$$
We then plot the whole volatility curve corresponding to the calibrated value of $\rho$. This is represented in Figure \ref{Fig:ImpliedVol-RenminbiCase} (bottom, left) which shows a good fit.
Moreover, if we try to calibrate using not only the atm option, but a certain number of different strikes, we obtained values of the calibrated correlations quite close to the previous value $\rho_{MVMD}(6M)$. 

As a second experiment, we perform a calibration of the shifted MVMD model, when introducing random correlations. The values obtained are
$$\rho^{1,1}=0.864, \rho^{1,2}=-0.2843,\ \rho^{2,1}=-0.3417,\ \rho^{2,2}=-0.1006$$
and the corresponding plot is shown in Figure \ref{Fig:ImpliedVol-RenminbiCase} (bottom, right).
The values of expected random correlation and standard deviation under the $\bQ$ measure are 
$$\bE^{\bQ}( \rho^{i,j})=-0.1020$$
$$\mbox{Std}^{\bQ}(\rho^{i,j})=0.3904.$$

\begin{figure}[h!]
\begin{minipage}[b]{\linewidth}
\centering
\includegraphics[scale=0.34]{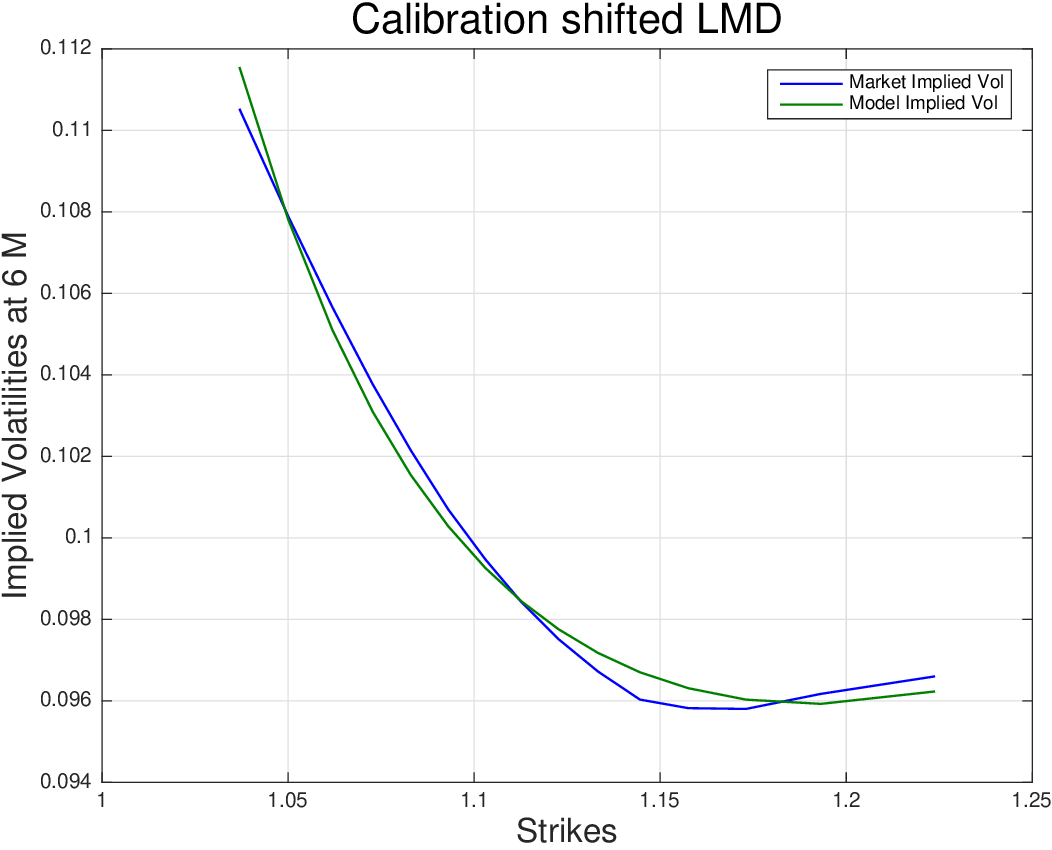}%[scale=0.4]
\centering
\includegraphics[scale=0.34]{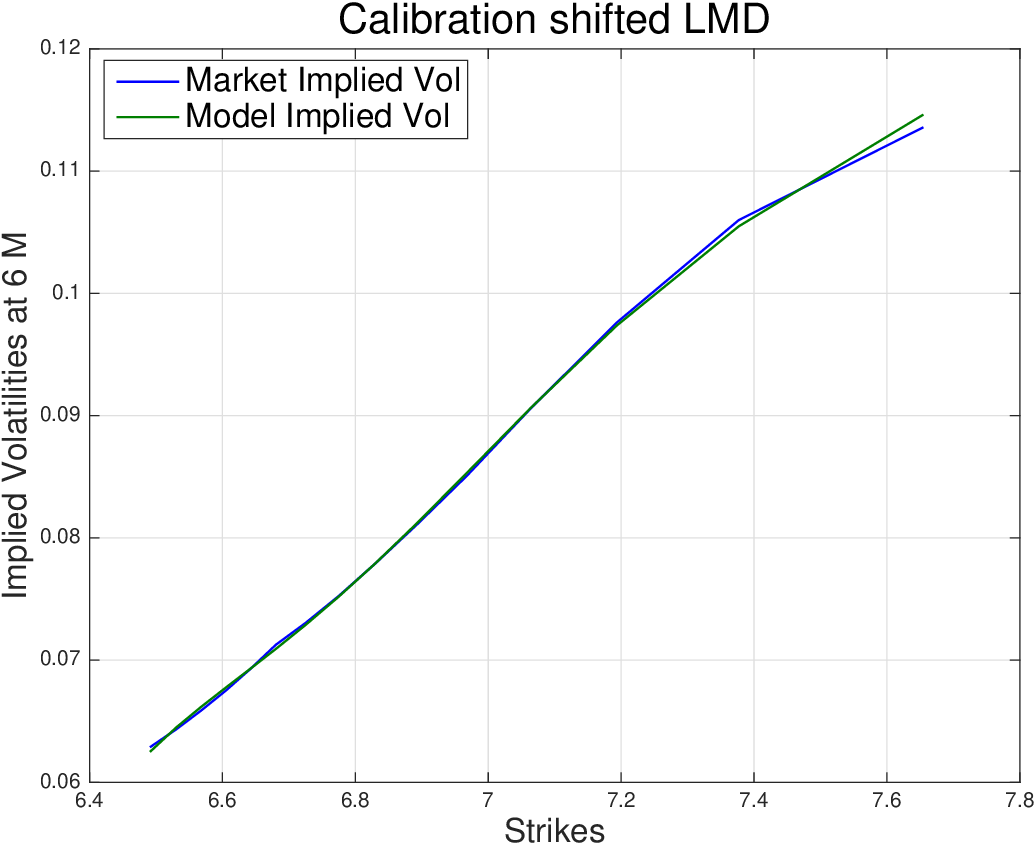}%[scale=0.4]
\end{minipage}
\begin{minipage}[b]{\linewidth}
\centering
\includegraphics[scale=0.34]{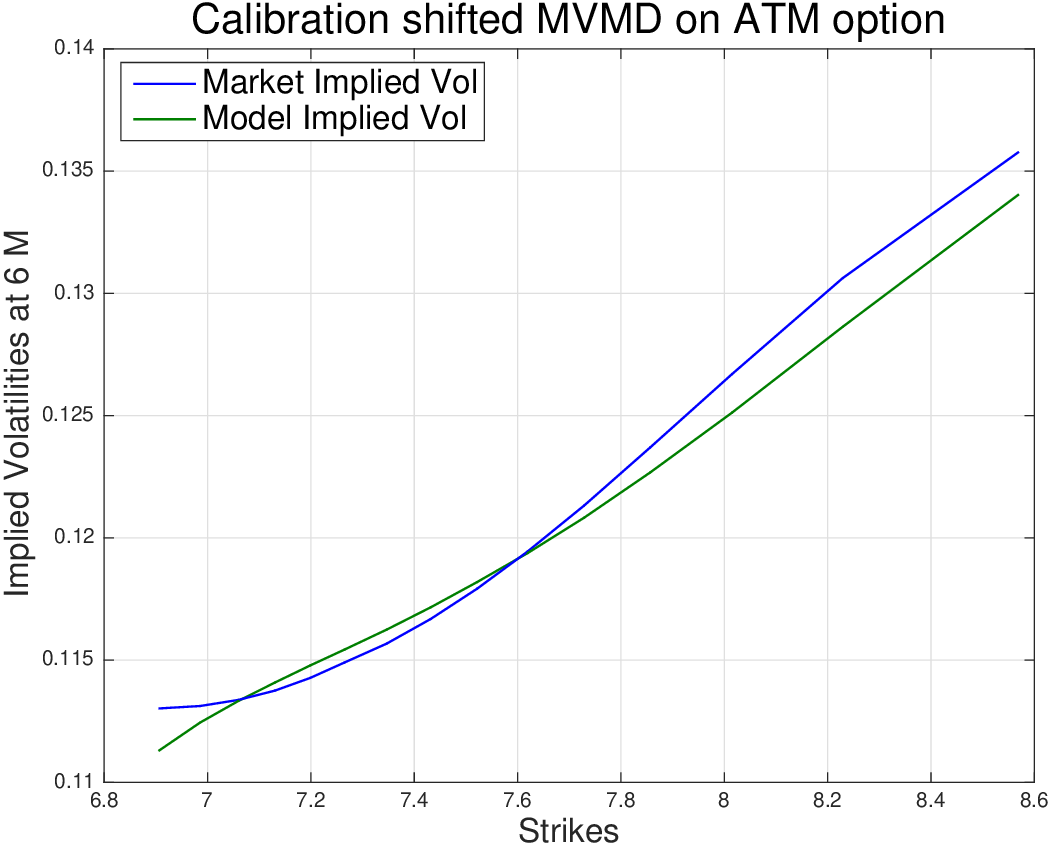}%[scale=0.4]
\centering
\includegraphics[scale=0.34]{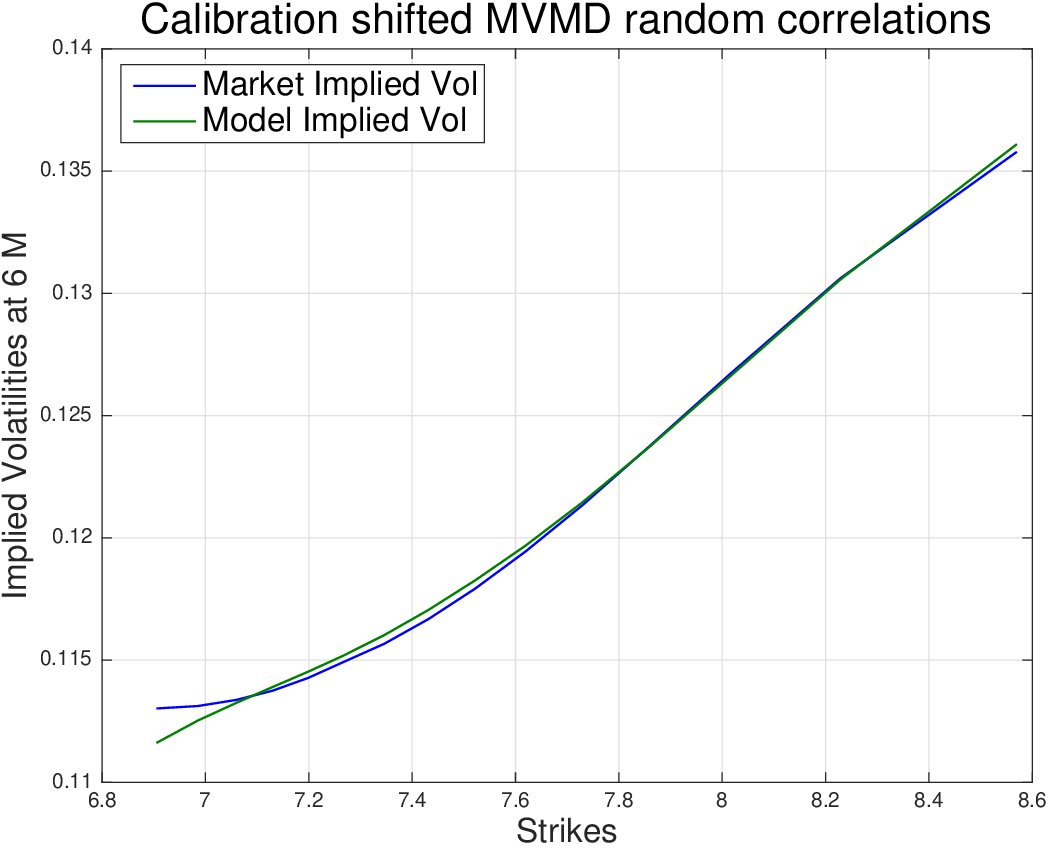}%[scale=0.4]
\end{minipage}
\caption{\textit{On top, calibration of the EUR/USD FX rate (left) and the USD/CNH FX rate (right) each independently on a shifted LMD model. We used data from 18 December 2015. On the bottom, calibration of the MVMD model on the EUR/CNH exchange. In the case on the left, the calibration is obtained by fitting the shifted MVMD model to the atm option only, obtaining $\rho=-0.1205$. In the case on the right, the calibration is obtained by fitting the shifted MVMD model with random correlation. In the last case  $\rho_1=0.864$, $\rho_2=-0.2843$, $\rho_3=-0.3417$, $\rho_4=-0.1006$.}}
\label{Fig:ImpliedVol-RenminbiCase}
\end{figure}

%\begin{figure}[h!]
%\begin{minipage}[b]{\linewidth}
%\centering
%\includegraphics[scale=0.34]{ImpliedVolEURUSD.eps}%[scale=0.4]
%\centering
%\includegraphics[scale=0.34]{ImpliedVolUSDCNH.eps}%[scale=0.4]
%\end{minipage}
%\caption{\textit{Calibration of the exchange rates EUR/USD and USD/CNH, each to a shifted LMD model. The data used are from 18 December 2015.}}
%\label{Fig:ImpliedVolEURUSD-USDCNH}
%\end{figure}
%\begin{figure}[h!]
%\begin{minipage}[b]{\linewidth}
%%\centering
%%\includegraphics[scale=0.34]{PricesEURCNHRhoATM.eps}%[scale=0.4]
%\centering
%\includegraphics[scale=0.34]{ImpliedVolEURCNHRhoATM.eps}%[scale=0.4]
%%\end{minipage}
%%\begin{minipage}[b]{\linewidth}
%%\centering
%%\includegraphics[scale=0.34]{PricesEURCNHRhoRandom}%[scale=0.4]
%\centering
%\includegraphics[scale=0.34]{ImpliedVolEURCNHRhoRandom.eps}%[scale=0.4]
%\end{minipage}
%\caption{\textit{Calibration of the MVMD model on $6$ months options on the EUR/CNH exchange, relative to data from 18 December 2015. In the first case (left), the calibration is obtained by fitting the shifted MVMD model to the atm option only, obtaining $\rho=-0.1205$. In the second case (right), the calibration is obtained by fitting the shifted MVMD model with random correlation. In the last case  $\rho_1=0.864$, $\rho_2=-0.3417$, $\rho_3=-0.2843$, $\rho_4=-0.1006$.}}
%\label{Fig:ImpliedVolEUR-CNH}
%\end{figure}

\section{Conclusions}
We introduced a shifted MVMD model where each single asset follows shifted LMD dynamics which are combined so that the mixture property is lifted to a multivariate level, in the same way as for the non-shifted case \cite{MVMDWorkingPaper}. In this framework, we analysed the implied correlation from cross exchange rates and compared the results with those in the shifted SCMD model where the single assets are connected by simply introducing instantaneous correlations among the Brownian motions driving each asset.

 Finally, we generalized the MUVM model in \cite{MVMDWorkingPaper}, having MVMD as a Markovian projection, to a shifted model with random correlation, achieving more flexibility. This allows one to capture the correlation skew better. Indeed, the numerical experiments which we have conducted show that this model may be able to consistently reproduce triangular relationships among FX cross rates, or in other words to reproduce the implied volatility of a cross exchange rate in a consistent way with the implied volatilities of the single exchange rates.

One possible further use of the models given here is in proxying the smile for illiquid cross FX rates resulting from the product of two liquid FX rates. While one would have to find the relevant correlation parameters, possibly based on historical estimation with some adjustments for risk premia, the models presented here allow us to infer the detailed structure of the cross FX rate smile in an arbitrage free way.

\section{Appendix}
In this Appendix we provide the details leading to definition \ref{def:ShiftedMVMD}.
We start by applying a shift to each component $Y_i^k$ of each asset as follows $$S_i^k(t)=Y_i^k(t)+\beta_i e^{\mu_i t}.$$
Keeping in mind that $Y_i^k$ satisfies
\begin{equation}
dY_i^k(t)=\mu_i Y_i^k(t) dt+\sigma_i^k(t) Y_i^k(t) dZ_i(t)\,\ d\langle Z_i,Z_j\rangle=\rho_{ij}dt,
\end{equation}
we obtain, by applying Ito's formula
\begin{equation}
dS_i^k(t)= \mu_i S_i^k(t) dt + \sigma_i^k(t)\left(S_i^k(t)-\beta_i e^{\mu_i t}\right)d Z_i(t).
\end{equation}
The corresponding asset price $S_i$ will therefore be a shifted LMD model with shift equal to $\beta_i e^{\mu_i t}$. In order to find the dynamics of the whole multidimensional process $S(t)$, that is the process corresponding to $S(t)$ after having applied the shift, we look for an SDE of the type 
\begin{equation}
d \underline{S}(t)=diag(\underline{\mu})\underline{S}(t) dt + diag(\underline{S}(t)) \widetilde  C(t, \underline{S}(t))B d \underline{W}(t)
\end{equation}
where $\rho=B B^T$ such that the corresponding density satisfies
\begin{align}\label{eq:densityShift}
&p_{\underline{S}(t)}(\underline{x})=\sum_{k_1,k_2,\dots k_n = 1}^N \lambda_1^{k_1} \cdots \lambda_n^{k_n} \tilde{\ell}_{1, \dots n ;t}^{k_1, \dots , k_n} (\underline{x})\, \\
&\tilde{\ell}_{1, \dots n;t}^{k_1, \dots , k_n} (\underline{x}) = p_{[S_1^{k_1}(t), \dots , S_n^{k_n}(t)]^T}( \underline{x}).
\end{align}
In other words, the density $p_{S(t)}$ is obtained by mixing the single densities $p_{S_i^k(t)}(x)$ in all the possible ways.

 In order to find the diffusion matrix $\widetilde  C$, we compute the Fokker-Planck equations for $p_{\underline{S}(t)}$ and $\tilde{\ell}_{1, \dots n;t}^{k_1, \dots , k_n}$.
Defining $\tilde a(t,\underline{S}(t))=(\widetilde  C B) (\widetilde  C B)^T$ where $\widetilde  C_i$ denotes the i-th row of $\widetilde  C$, we obtain
\begin{equation}\label{eq:FokkerPlanckA(t)}
\frac{\partial}{\partial t}p_{\underline{S}(t)}(x)=- \sum_{i=1}^n \frac{\partial}{\partial x_i}\left[ \mu_i x_i p_{\underline{S}(t)}(x) \right] + \frac{1}{2} \sum_{i,j=1}^n \frac{\partial^2}{\partial x_i \partial x_j}\left[\tilde a_{ij}(t, \underline{x}) x_i x_j p_{\underline{S}(t)}(x) \right]
\end{equation}
and
\begin{multline*}
\frac{\partial \tilde{\ell}_{1, \dots n;t}^{k_1, \dots , k_n} (\underline{x})}{\partial t}= - \sum_{i=1}^n \frac{\partial}{\partial x_i}\left(\mu_i^{k_i}x_i \tilde{\ell}_{1, \dots n;t}^{k_1, \dots , k_n} (\underline{x}) \right)
\\ + \frac{1}{2} \sum_{i,j=1}^n \frac{\partial^2}{\partial x_i \partial x_j} \sigma_i^{k_i}(t)(x_i-\beta_i e^{\mu_i^{k_i}})\sigma_j^{k_j}(t)(x_j-\beta_j e^{\mu_j^{k_j} t}) \rho_{i,j} \tilde{\ell}_{1, \dots n;t}^{k_1, \dots , k_n} (\underline{x}).
\end{multline*}
Making use of equation (\ref{eq:densityShift}) and the equation above
\begin{multline*}
\frac{\partial }{\partial t}p_{\underline{S}(t)}(x)= \sum_{k_1,k_2,\dots k_n = 1}^N \lambda_1^{k_1} \cdots \lambda_n^{k_n} \frac{\partial }{\partial t} \tilde{\ell}_{1, \dots n ;t}^{k_1, \dots , k_n} (\underline{x})= \\
=  \sum_{k_1,k_2,\dots k_n = 1}^N \lambda_1^{k_1} \cdots \lambda_n^{k_n} \Bigl[ - \sum_{i=1}^n \frac{\partial}{\partial x_i}\left(\mu_i x_i \tilde{\ell}_{1, \dots n;t}^{k_1, \dots , k_n} (\underline{x}) \right)
\\ + \frac{1}{2} \sum_{i,j=1}^n \frac{\partial^2}{\partial x_i \partial x_j} \sigma_i^{k_i}(t)(x_i-\beta_i e^{\mu_i})\sigma_j^{k_j}(t)(x_j-\beta_j e^{\mu_j t}) \rho_{i,j} \tilde{\ell}_{1, \dots n;t}^{k_1, \dots , k_n} (\underline{x})\Bigr].
\end{multline*}
On the other hand, from equation (\ref{eq:FokkerPlanckA(t)})
\begin{multline*}
\frac{\partial }{\partial t}p_{\underline{S}(t)}(x)=- \sum_{i=1}^n \frac{\partial}{\partial x_i}\left[ \mu_i x_i \left(\sum_{k_1,k_2,\dots k_n = 1}^N \lambda_1^{k_1} \cdots \lambda_n^{k_n} \tilde{\ell}_{1, \dots n ;t}^{k_1, \dots , k_n} (\underline{x}) \right) \right] \\ + \frac{1}{2} \sum_{i,j=1}^n \frac{\partial^2}{\partial x_i \partial x_j}\left[\tilde a_{ij}(t, \underline{x}) x_i x_j \left( \sum_{k_1,k_2,\dots k_n = 1}^N \lambda_1^{k_1} \cdots \lambda_n^{k_n} \tilde{\ell}_{1, \dots n ;t}^{k_1, \dots , k_n} (\underline{x}) \right) \right].
\end{multline*}
Finally, comparing the two expressions obtained for $\frac{\partial }{\partial t}p_{\underline{S}(t)}(x)$
\begin{align*}
&\frac{1}{2} \sum_{i,j=1}^n \frac{\partial^2}{\partial x_i \partial x_j} \sum_{k_1,k_2,\dots k_n = 1}^N \lambda_1^{k_1} \cdots \lambda_n^{k_n}\Bigl[\tilde a_{ij}(t, \underline{x}) x_i x_j - \\ 
&\sigma_i^{k_i}(t)(x_i-\beta_i e^{\mu_i})\sigma_j^{k_j}(t)(x_j-\beta_j e^{\mu_j t})\rho_{i,j} \Bigr] \tilde{\ell}_{1, \dots n;t}^{k_1, \dots , k_n} (\underline{x})=0
\end{align*}

so that
\begin{equation*}
 a_{ij}=\frac{ \sum_{k_1,k_2,\dots k_n = 1}^N \lambda_1^{k_1} \cdots \lambda_n^{k_n}  \sigma_i^{k_i}(t)(x_i-\beta_i e^{\mu_i})\sigma_j^{k_j}(t)(x_j-\beta_j e^{\mu_j t}) \rho_{i,j} \tilde{\ell}_{1, \dots n;t}^{k_1, \dots , k_n} (\underline{x})}{x_i x_j \sum_{k_1,k_2,\dots k_n = 1}^N \lambda_1^{k_1} \cdots \lambda_n^{k_n}  \tilde{\ell}_{1, \dots n;t}^{k_1, \dots , k_n} (\underline{x})} .
\end{equation*}

\end{document}